\newtheorem{prop}{Proposition}
\begin{document}\sloppy

\nocite{*}
\title{Aggregate Modeling and Equilibrium Analysis of the Crowdsourcing Market for Autonomous Vehicles}
\author{Xiaoyan Wang\textsuperscript{a}\hspace{1em} Xi Lin\textsuperscript{a}\hspace{1em} Meng Li\textsuperscript{a,}\textsuperscript{b,}\footnote{Corresponding author. E-mail address: \textcolor{blue}{mengli@tsinghua.edu.cn}.}}
\affil{\small\emph{\textsuperscript{a}Department of Civil Engineering, Tsinghua University, Beijing 100084, P.R. China}\normalsize}
\affil{\small\emph{\textsuperscript{b}Center for Intelligent Connected Vehicles and Transportation, Tsinghua University, Beijing 100084, P.R. China}\normalsize}
\date{}
\maketitle

\begin{abstract}
\noindent
Autonomous vehicles (AVs) have the potential of reshaping the human mobility in a wide variety of aspects. This paper focuses on a new possibility that the AV owners have the option of "renting" their AVs to a company, which can use these collected AVs to provide on-demand ride services without any drivers. We call such a mobility market with AV renting options the "AV crowdsourcing market". This paper establishes an aggregate equilibrium model with multiple transport modes to analyze the AV crowdsourcing market. The modeling framework can capture the customers' mode choices and AV owners' rental decisions with the presence of traffic congestion. Then, we explore different scenarios that either maximize the crowdsourcing platform's profit or maximize social welfare. Gradient-based optimization algorithms are designed for solving the problems. The results obtained by numerical examples reveal the welfare enhancement and the strong profitability of the AV crowdsourcing service. However, when the crowdsourcing scale is small, the crowdsourcing platform might not be profitable. A second-best pricing scheme is able to avoid such undesirable cases. The insights generated from the analyses provide guidance for regulators, service providers and citizens to make future decisions regarding the utilization of the AV crowdsourcing markets for serving the good of the society.  \par
\hfill\break%
\noindent\textit{Keywords}: Crowdsourcing; Autonomous vehicles; Mobility market; Equilibrium; Pricing
\end{abstract}
\section{Introduction} \label{sec1}

\noindent With the rapid development of artificial intelligence and communication technology in recent years, autonomous vehicles (AVs) are undergoing rapid development. There has been an intense effort by researchers and manufactures to enable AVs to handle a wide range of driving conditions that can be encountered  during the road travel. In 2018, public road tests for AVs to provide on-demand ride service were performed, and in 2020, some major transportation network companies (TNCs) such as Lyft, Waymo and Baidu have launched commercialized AV on-demand ride service programs in certain areas around the globe. It is predicted that approximately $40\%$ of vehicle travel could be autonomous in the 2040s and traveling by AV may become the predominant transport mode in the future. It is widely believed that due to the increased mobility and safety, reduced congestion, emissions and travel costs, AVs have the potential to reshape future human transport.
 \par

High-tech AVs can perform tasks that cannot be performed by manually driven vehicles (MVs). One of the most distinct differences between AVs and MVs is that AVs owned by a person can be used for other tasks when not used by the owner such as traveling to other places without drivers to complete certain tasks. One potential use is to serve other riders (Figure \ref{illustration_fig} for illustration). This would provide opportunities for people who do not own AVs to utilize the service.

\begin{figure}[!ht]
	\centering
	\includegraphics[width=1\textwidth]{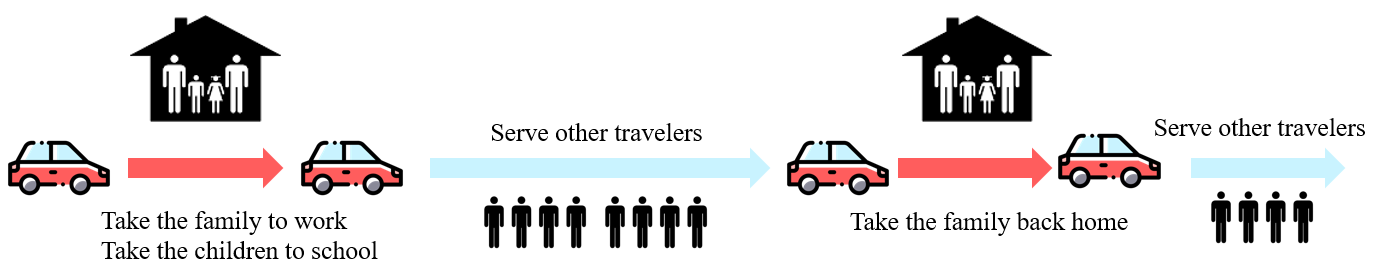}
	\caption[]{One day's task chain of an autonomous vehicle} 
	\label{illustration_fig}
\end{figure}
\par
\noindent

When a large number of AV owners are able to share their AVs to provide ride services to others, a platform is required to aggregate the information of those AVs with that of the customer demands, and act as a bridge connecting the AV owners with the customers. Similar to the current TNCs such as Uber and Lyft, the platform identifies the dynamically-updated supply from AV owners, and then matches the available AVs with the customers generated within the region in a real-time fashion. The platform needs to pay these AV owners a certain amount of money to encourage them to share their AVs, and meanwhile the platform charges the customers a certain fee for their trips; the difference between the platform gain from travelers and the platform payment to AV owners is the revenue that should be equal to a considerable amount so that people have the incentive to establish and operate such a service.
\noindent

The aforementioned new type of mobility market with AV renting options is called "the AV crowdsourcing market" in this study. The concept of crowdsourcing, first introduced by Brabham in 2008  and later perfected by other researchers using case studies, refers to the act of a company handing over a part of its tasks, particularly needs-based problems, such as hatching ideas, designing algorithms and public supervision to the public crowd who have the potential to finish the job better than employees. Developed communication technologies enable such business model to be realized, through which a wide range of the public including but not limited to social experts are attracted and involved to produce a better outcome for the companies. For the AV crowdsourcing market, our primary interest is to answer two key questions: 1) can AV crowdsourcing benefit society? and 2) can the operating agency gain from operating the crowdsourcing platform? The first question relates to the utilitarian objective of such business model that tells social managers whether they should allow or regulate the market, and the second question is related to the spontaneity problem whether people have the incentive to run such a business. To answer these questions, we establish an aggregate equilibrium modeling framework in this study. The model considers three market subjects: travelers, AV owners and the crowdsourcing platform, where AV owners and travelers may have some coincidence within a time period. The platform determines the payment to the AV owners and the charge to the travelers, and the travelers and AV owners determine their mode choices and rental choices. Under a user's utility-maximization framework, we can capture the equilibrated state of the whole system. Then, based on the equilibrium state, the platform can adjust the payment and pricing strategies to achieve different goals, including profit maximization, social welfare maximization (first-best), and constrained social welfare maximization (second-best). Numerical experiments and sensitivity analyses are conducted to draw a variety of insights on the AV crowdsourcing market.
\noindent

The rest of this paper is organized as follows. Section \ref{sec2} reviews related literature on AVs, ride-providing services and crowdsourcing. In Section \ref{sec3}, we develop the static equilibrium model for the AV crowdsourcing market, and investigate some properties of the model. Section \ref{sec4} formulates the platform's decision framework across different market scenarios with various transport modes, and then propose a gradient-based algorithm to solve the problem. In Section \ref{sec5}, we use numerical examples to illustrate key points and insights of our model, and then present some discussions. Finally, Section \ref{sec6} concludes the paper.
\par
\section{Related literature} \label{sec2}

The body of literature on AVs has been continuously growing in recent years. Control of fully AVs may be the best investigated topic (Wu et al., 2020), including route planning (Wang et al., 2019; Bang and Ahn, 2018), longitudinal control, i.e. Cooperative Adaptive Cruise Control (Wang et al., 2014; Gong and Du, 2018), and lateral control of autonomous vehicles (Luo et al., 2016; Yang et al., 2018). As a sophisticated scenario in transportation ﬁeld, control of an isolated intersection has been studied with "signal-free" schemes (Lee and Park, 2012; Xu et al., 2018) and "signalized" schemes (Li et al., 2014; Yu  et al., 2018) where the aim is to increase intersection capacity and reduce delay on the basis of driving safety. Recently, Chen et al. (2020) proposed a rhythmic control scheme where CAVs pass through an intersection with a preset rhythm, and Lin et al. (2021), further expended the rhythmic control to a grid network. Meanwhile, studies found that AVs with communication technologies that can smooth oscillations and reduce crash probability, will greatly improve the capacity of a highway (Michael et al., 1998; Tientrakool et al., 2011; Bian et al., 2019) and reduce road congestion (Fagnant and Kockelman, 2015; Sun et al., 2020). AVs are highly advantageous for providing high-level transport experience so that they will change or even reshape future travel modes. One change will revolutionize the taxi industry. Literature on AV on-demand ride service mainly focuses on its operation. For example, Vosooghi et al. (2019), considering multi-modal dynamic demand, investigated SAV operations including the ﬂeet size, vehicle capacities, ridesharing and rebalancing strategies through simulations. Tang et al. (2020) proposed an advisor-student reinforcement learning framework to organize the autonomous electric taxi ﬂeet in an online manner.
\noindent

Although there is scarcely any literature on AV on-demand ride service, a wide range of studies on the traditional taxi market and on-demand ride service market have been performed that provide great inspiration for our study. Yang and Wong (1998) used a network model to establish the equilibrium of the cruising taxi market and offered some policy-relevant results including average taxi utilization and average customer waiting time for decision making. Daganzo and Ouyang (2019) presented a general analytic framework to model transit systems that provide door-to-door service, including non-shared taxi and shared taxi service. Studies pointed out that both the traditional taxi market and on-demand ride market has similar economic rules. Arnott (1996) performed an economic analysis of the dispatch taxi market and found that the ﬁrst-best taxi pricing entails operation at a loss under economies of density. Zha et al. (2016) using an aggregate model of the on-demand ride market, pointed out that under economies of scale, the company will suffer a deﬁcit in the ﬁrst-best scenario. Vignon and Yin (2020) considered the dual modes of ride-sourcing and ride-sharing and also pointed out that when congestion is low, the service provider must be subsidized to achieve the ﬁrst-best scenario. Vignon and Yin (2020) developed several concrete approaches to achieve the second-best scenario, including regulation of prices, vehicles' vacant duration and the number of drivers. In this paper, we will show that in some circumstances, the AV crowdsourcing platform can be proﬁtable in the ﬁrst-best scenario. Other inﬂuential works on the regulation of the MV on-demand ride market include Yu et al. (2017) that analyzed the regulation of a market with both traditional taxis and on-demand ride service, and further proved that on-demand ride service is competitive with the traditional taxi service. He et al. (2018) modeled the effects of the customers' reservation cancellation behaviors on the network and designed the pricing and penalty strategy. Wang and Yang (2019) have made a comprehensive review of the literature on on-demand ride services. However, none of the above works considered the future transport modes in which AVs can drive autonomously and can serve customers on their own (Zmud et al., 2018; Stocker and Shaheen, 2018; Narayanan et al., 2020). The potential AV crowdsourcing market enabled by the driverless feature of AVs has been seldom investigated in the existing literature.
\noindent

Crowdsourcing has boomed in various industries in recent years (Brabham, 2008). It can be applied to collect ideas to improve products or services (Bayus, 2010; Poetz and Schreier, 2012; Bayus, 2013), city inspection (Kang et al., 2013; Glaeser et al., 2016) and other applications. Platforms such as Threadless and IStockPhoto collect and share designer's talents around the world; Kaggle and TunedIT gather studies and solutions to difﬁcult problems; other companies such as PeoplePerHour, Crowdspirit, Editzen and MusikPitch are examples of crowdsourcing (Zhao and Zhu, 2014). A business model in the hotel industry similar to AV crowdsourcing is shared lodging where one of the typical companies is Airbnb (Zervas et al., 2017). The AV crowdsourcing that we propose in this paper has been referred to in Stocker and Shaheen (2018) and described as hybrid AV ownership with the same entity operating. Stocker and Shaheen (2018) preliminary listed potential shared autonomous vehicle business models but did not make a detailed analysis of them; to bridge this gap, such models are investigated in the present paper.

\section{Equilibrium Model} \label{sec3}

This section introduces the equilibrium model of the AV crowdsourcing market. We consider a time period with duration $h$ (e.g., 2 hours) for which the market conditions are time invariant, and therefore we can treat it as a stationary state. In this hypothetical market, there exists a crowdsourcing platform; the AV owners have the option to rent their AVs to this crowdsourcing platform, and the crowdsourcing platform collects these AVs to provide on-demand ride services for other riders. In addition to the collected AVs, the crowdsourcing platform can also buy some AVs from external markets to provide the services. On the other hand, travelers can choose to use the on-demand ride services, to drive private (manual or autonomous) vehicles or to take public transit to fulfill their travel needs (the average trip distances of taking different modes are identical); meanwhile, AV owners can choose to rent their cars to this crowdsourcing platform. Traffic congestion induced by these trips is considered in the framework. A schematic illustration is provided in Figure
\ref{concept_fig}. 
\begin{figure}[!ht]
	\centering
	\includegraphics[width=1\textwidth]{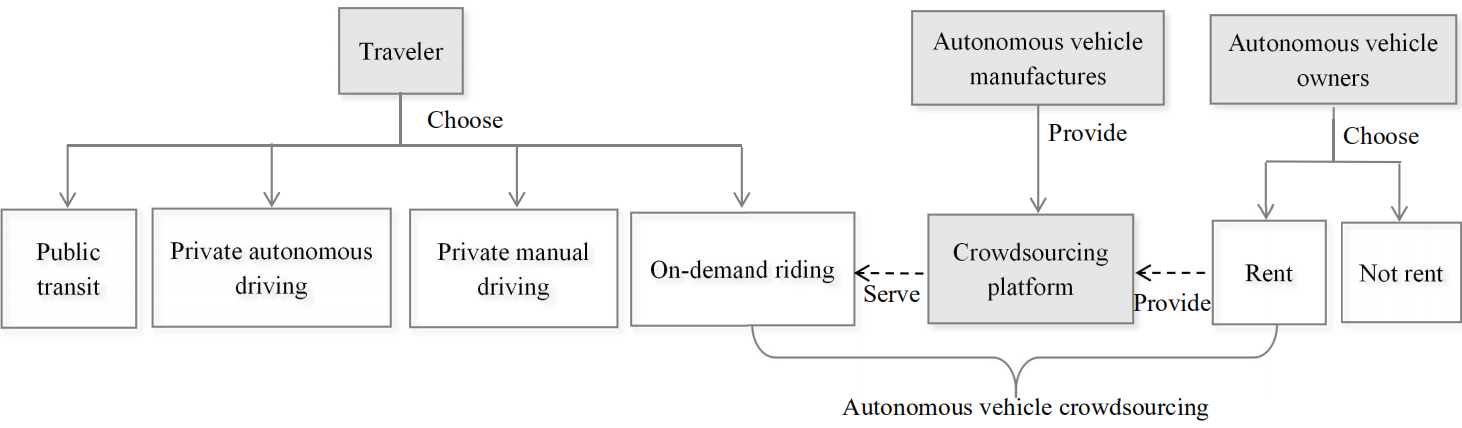}
	\caption[]{Conceptual design of an AV crowdsourcing market} 
	\label{concept_fig}
\end{figure}
\par
\noindent

In the following, we will introduce the details of the equilibrium model establishment. We first describe the equilibrium state of the on-demand ride service; and then, we model the travelers' mode choices as well as the AV owners' rental choices based on the utilities associated with these selections; and finally, we describe the equilibration of supply and demand in a steady-state mobility market with AV renting options.

\subsection{Ride services and traffic congestion} \label{sec3_1}

We separate one-day time into several periods and use piece-wise fixed level of demand to approximate the time-variant demand. In this section, we focus on a single period where the platform set prices to equilibrate the supply and demand of on-demand rides. On-demand ride services are composed of matching, pick-up and on-travel processes. The matching is assumed to be a bilateral process where both the available AVs and the customers actively search for each other. Since this process does not display distinctive differences from the existing on-demand ride services, we can adopt Zha's formula 
to describe the relation between the customers' average searching time $w^c$  and the AVs' average searching time $w^t$. Derived from the Cobb-Douglas matching function, the customers' average searching time function at the equilibrium can be formulated as follows: 

\begin{align}
& w^c= q_o^{(1-\alpha_1-\alpha_2)/\alpha_2} {A^{-1/\alpha_2}} {{(w^t)}^{-\alpha_1/\alpha_2}}\label{eq1}
\end{align}

Upon being matched, the AV drives itself to the customized location to pick up the customer. We assume that the matching algorithm assigns to every request one of the closest $(w^t q_o)$ available vehicles. Given the size of the city $R$, the average pick-up time $ t_p \approx k{(w^t q_o)}^{-1/2} \sqrt{R}/v$, where $k$ is a parameter depending on the network topology and $k\approx0.63$ for networks that resemble rectangular grids and $v$ is the cruising speed. The average trip time from picking up a customer to arriving at the destination is $t_r = \kappa \frac{\sqrt{R}}{v}$, where $\kappa$ is a constant reflecting the average travel distance. Then, we obtain the following equation between the average pick-up time $t_p$ and the average trip time from picking up a customer to the arrival at the destination $t_r$:

\begin{align}
& t_p= \theta \left(w^t q_o\right)^{-\frac{1}{2}}t_r\label{eq2}
\end{align}

\noindent where $\theta = k/\kappa$ is a city-specific constant. \par

Next, we discuss the modeling of traffic congestion. Since we have constructed the relation between the average pick-up time $t_p$ and the average trip time $t_r$, in the following we focus on the effect of traffic congestion on the average trip time $t_r$. The functional form is given by the following equation: 

\begin{align}
& t_r= T_s\left[ q_m + \alpha\left(q_a - q_o\right) + \alpha q_o (1 + \theta (w^t q_o)^{-\frac{1}{2}} ); R, v_0 \right ] \label{eq3}
\end{align}

In the above equation, $T_s(\cdot)$ is a continuously differentiable, monotonically increasing and convex function; $q_m$, $q_a$, $q_o$ are unit-time travel demands for manual driving, autonomous driving (including both private vehicles and crowdsourcing vehicles) and on-demand ride service, respectively; $\alpha \in (0,1)$ is a parameter representing the "relative occupation" of AVs compared to MVs (because the AV technology can reduce the vehicle headway in traffic flows). The equation states that the average trip time $t_r$ is a function related to city size $R$ and free-flow cruising speed $v_0$, and it is a monotonically increasing and convex function with regards to the term $q_m + \alpha(q_a - q_o) + \alpha q_o (1 + \theta (w^t q_o)^{-\frac{1}{2}} )$. In this term, $q_m$ represents the MVs' contribution to congestion, $\alpha (q^a-q_o)$ represents privately-used AVs' contribution to congestion, and $\alpha q_o (1 + \theta (w^t q_o)^{-\frac{1}{2}} )$ represents the crowdsourcing AVs' contribution to traffic congestion; the latter has a factor of $(1 + \theta (w^t q_o)^{-\frac{1}{2}} )$ because it includes both the pick-up trips and delivery trips, where $\theta (w^t q_o)^{-\frac{1}{2}}$ is associated with Eq. (\ref{eq2}). The functional form of $T_s(\cdot)$ can be obtained from realistic data. \par

Despite their analytical formulation, there are some mathematical difficulties associated with Eqs. (\ref{eq1})-(\ref{eq3}). The main difficulties lie in the terms $(w^t)^{-\alpha_1 / \alpha_2}$ in Eq. (\ref{eq1}) and $(w^t q_o)^{-\frac{1}{2}}$ in Eqs. (\ref{eq2}) and (\ref{eq3}), where there are negative powers associated with these variables; this gives rise to discontinuity and possibly results in imaginary numbers in the process of solving the equilibrium. To address this issue, we introduce some continuous counterparts to replace the above discontinuous functions. The idea is as follows (taking $(w^t)^{-\alpha_1 / \alpha_2}$ as example). We first choose a very small positive number $\epsilon$. Then, when $w^{t} \ge \epsilon$, the counterpart takes exactly the same form as $(w^t)^{-\alpha_1 / \alpha_2}$; and when $w^{t} < \epsilon$, the counterpart is a linear function such that the whole function is continuous and smooth at $\epsilon$. The same approximation is used for $(w^t q_o)^{-\frac{1}{2}}$. This approximation is reasonable because $w^t$ and $w^t q_o$ only have physical meaning when they are positive, and a sufficiently small $\epsilon$ can ensure that the approximations are almost the same as the original values on the positive horizon. In the following, we denote the two approximations as $\xi_1(w^t)$ and $\xi_2(w^t q_o)$. The graphical illustration of the approximation is shown in Figure \ref{Approximation_fig}.

\begin{figure}[!ht]
	\centering
	\includegraphics[width=0.6\textwidth]{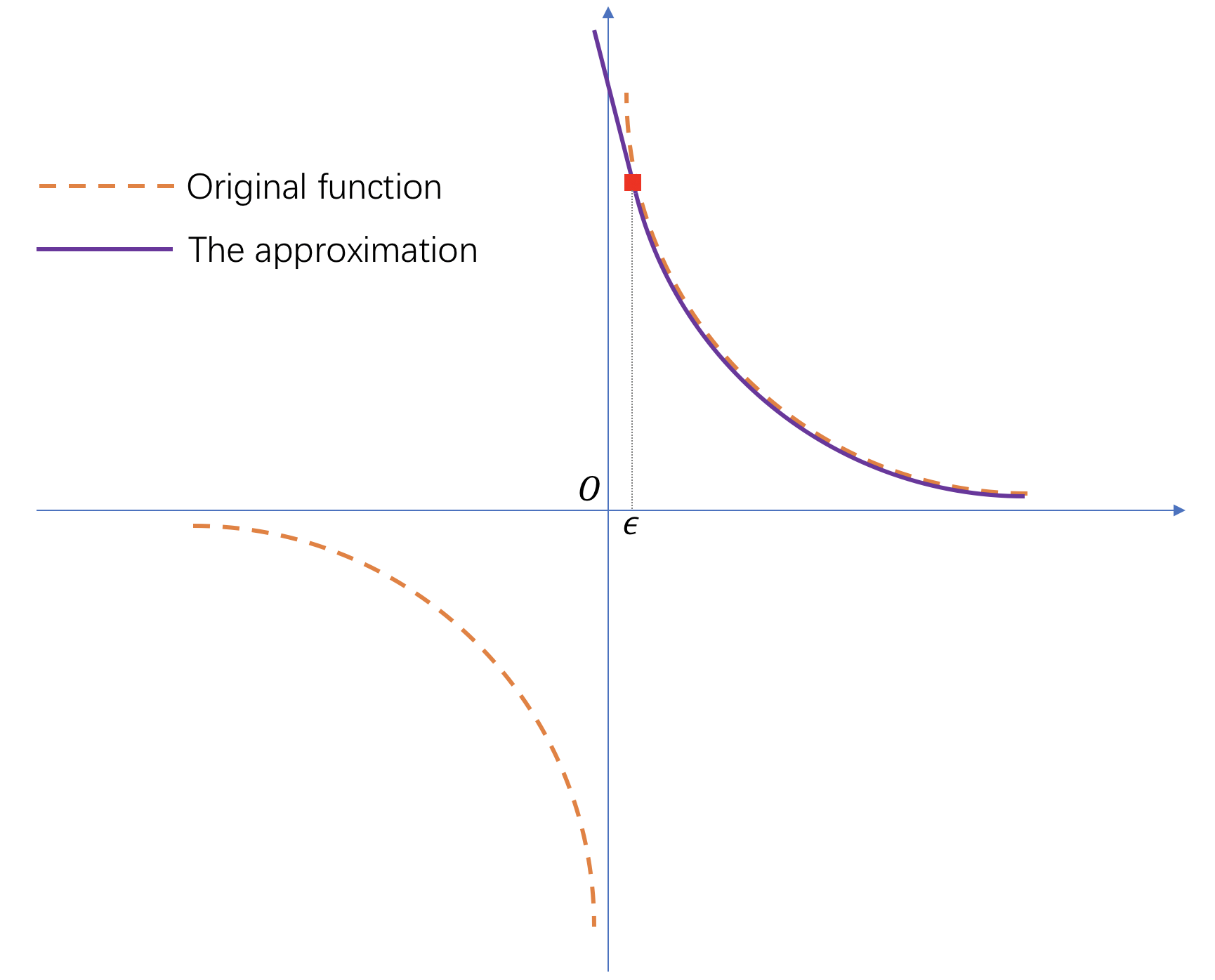}
	\caption[]{Illustration of function approximation} 
	\label{Approximation_fig}
\end{figure}
\par

With the above-mentioned function approximation scheme, we can express the approximated counterparts of Eqs. (\ref{eq1})-(\ref{eq3}) as described below. These expressions will be used in the model analysis and algorithmic development.

\begin{align}
& w^c = W^c (q_o, w^t; A, \alpha_1, \alpha_2) \label{eq1_approx} \\
& t_p = T_p (w^t, q_o, t_r; \theta) \label{eq2_approx} \\
& t_r = T_r (q_m,q_a,q_o,w^t; R, v_0, \theta) \label{eq3_approx}
\end{align}

\subsection{Choice modeling} \label{sec3_2}

\begin{figure}[!ht]
	\centering
	\includegraphics[width=0.65\textwidth]{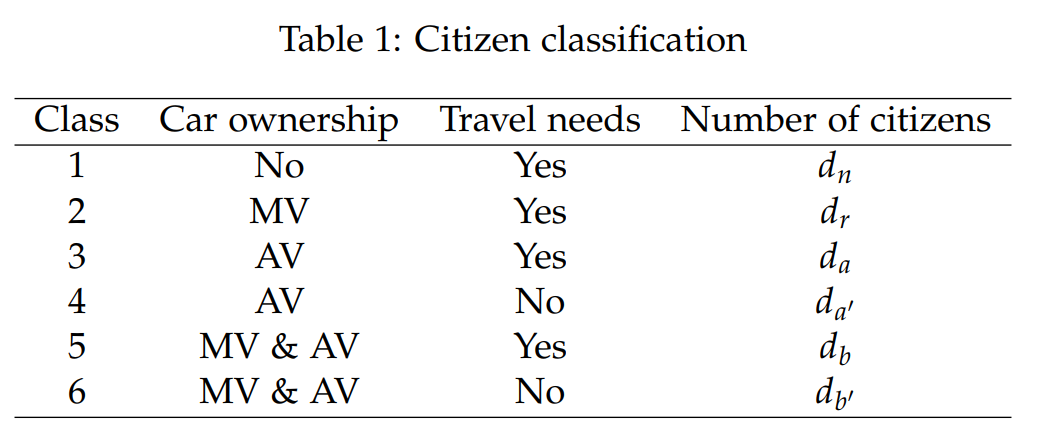}
	\label{class_fig}
\end{figure}
\noindent

AV owners have the incentive to rent their cars to the crowdsourcing platform only if this action can earn profit. Suppose that the rental depends on the serving rates, i.e., for a period with duration $h$, the rental revenue for the AV owners is given by $(pn_0)$, where $p$ is the  payment per ride and $n_0$ is the expected number of on-demand rides served by an AV in this period. The rental revenue can be viewed as the "commission fee" for the vehicle providers. Then, the utility of a citizen renting his/her private AV to the crowdsourcing platform for each period with duration $h$ is given by $(pn_0-m)$, where $m$ is a constant indicating the expected additional cost of sharing the private vehicle for public use, including energy consumption, vehicle depreciation and psychological costs. The AV owners who choose not to rent their cars will receive zero rental utility. 
\par

Next, we discuss the mode choices for citizens with travel needs in this period. The travel (dis)utility includes time costs and monetary costs. We respectively specify the perception of on-board time for AVs, public transit and MVs as $\beta_A$, $\beta_P$ and $\beta_M$, and the relation among them is $\beta_A < \beta_P < \beta_M$; this is because driving a car costs the most human effort, and taking public transit should be generally less comfortable than taking an AV. The perception of waiting for a crowdsourcing AV is specified as $\gamma$ (The waiting time is composed of matching and pick-up time). On the other hand, it is assumed that the average travel time $t_n$ and fare $F_n$ of taking public transit are constant, i.e., unaffected by traffic congestion (e.g., BRT vehicles or subway). Given the average fare $F_o$ for each trip on a crowdsourcing AV, the utility of taking on-demand ride service can be stated as $(-F_o -\beta_A t_r-\gamma(w^c+t_p))$. The utilities of other transport modes can be similarly expressed. The utilities for all choices are shown in Table 2, and the (dis)utility $V_x^{(i,j)}$ can be calculated by adding up the rental utility and travel utility in each choice $(i,j)$ for the citizens of class $x$. 

\begin{figure}[!ht]
	\centering
	\includegraphics[width=1\textwidth]{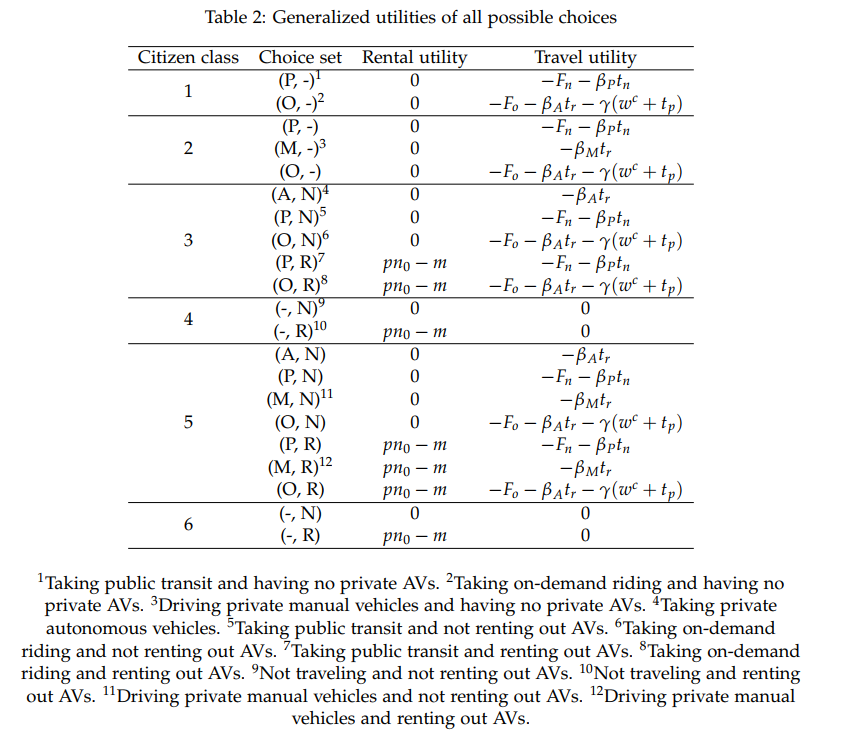}
	\label{utility_fig}
\end{figure}
\noindent

In the AV crowdsourcing market, we assume that sensitivity to travel utility and rental utility for one class of people are identical; and the random variables are identically distributed with a Gumbel density function. Then, the probability for the choice $(i, j)$ by people of class $x$ can be captured by the following logit model: 

\begin{align}
& \pi_x^{(i,j)}=\frac{e^{\mu_x V_x^{(i,j)}}}{\sum\nolimits_{(i,j) \in \mathcal{C}_x}^{} e^{\mu_x V_x^{(i,j)}}} & \forall (i,j) \in \mathcal{C}_x, x \in \{1,2,\dots,6\}\label{eq4}
\end{align}

\noindent
where $\mu_x$ is a nonnegative parameter representing the degree of uncertainty for the choices from the perspective of people of class $x$. A larger parameter $\mu_x$ means that people have more knowledge about the AV crowdsourcing market and thus can be more certain about their utilities, i.e., an increasing number of class $x$ citizens would give up the choice $(i,j)$ when $V_x^{(i,j)}$ decreases by a unit. In modeling customer choices, however, it is currently unknown that whether the sensitivity coefficients associated with travel and rental choices are similar. If they are not identical, we need a nested logit model to capture the choice probabilities. The detailed discussions of the nested logit models are presented in Appendix B.

\subsection{Supply and demand} \label{sec3_3}

This section establishes the equilibration between supply and demand. For crowdsourcing AVs, the supply equation can be written as:

\begin{align}
& N=N_s+N_r\label{eq5}
\end{align}

\noindent
where $N$ is the total number of usable AVs for the crowdsourcing platform in this period; $N_s$ is the number of pre-purchased AVs by the crowdsourcing platform; and $N_r$ is the number of AVs collected from crowdsourcing in this period. The number of AVs crowdsourced from the society can be expressed by the following equation:

\begin{align}
& N_r=d_a^{}\left(\pi_a^{P,R}+\pi_a^{O,R}\right)+d_{a'}^{}\pi_{a'}^{-,R}+d_b\left(\pi_b^{P,R}+\pi_b^{M,R}+\pi_b^{O,R}\right)+d_{b'}^{}\pi_{b'}^{-,R}\label{eq6}
\end{align}
\noindent

\noindent where the first to last terms of the RHS represent the number of AVs from citizens belonging to classes 3-6, respectively.\par

Similarly, on the demand side, the equilibration can be expressed as:

\begin{align}
& q_o h=d_n^{}\pi_n^{O,-}+d_r^{}\pi_r^{O,-}+d_a^{}\left(\pi_a^{O,N}+\pi_a^{O,R}\right)+d_b\left(\pi_b^{O,N}+\pi_b^{O,R}\right)\label{eq7}\\
& q_a h=q_o h + d_a\pi_a^{A,N}+d_b\pi_b^{A,N}\label{eq8}\\
& q_m h=d_r\pi_r^{M,-}+d_b\left(\pi_b^{M,N}+\pi_b^{M,R}\right)\label{eq9}\\
& q_p h=d_n\pi_n^{P,-}+d_r\pi_r^{P,-}+d_a\left(\pi_a^{P,N} + \pi_a^{P,R}\right)+d_b\left(\pi_b^{P,N}+\pi_b^{P,R}\right)\label{eq10}
\end{align}
\noindent

In Eq. (\ref{eq7}), the LHS is the total number of customers using the on-demand ride services in this period, and the RHS is the sum of the travelers choosing the associated mode. Eqs. (\ref{eq8})-(\ref{eq10}) express similar equilibria for the AV riders (including privately used ones and crowdsourcing ones), for manually driving, and for taking public transit, respectively.\par

According to the stability requirement, the number of trips of the on-demand rides should be equal to the total number of crowdsourcing AV riders during a period with duration $h$:

\begin{align}
& Nn_0=q_o h \label{eq11}
\end{align}
where $N$ is the total number of crowdsourcing AVs and $n_0$ is the expected number of the on-demand rides served by an AV in this period. 
\noindent

Lastly, the sum of searching time, pick-up time and on-board time for all AVs put into use should be equal to their total service time. Therefore, the service time constraint in view of unit time is modeled as follows:

\begin{align}
& N=q_o \cdot \left(w^t+t_p+t_r\right)\label{eq12}
\end{align}
where $q_o$ is the number of crowdsourcing trips and $\left(w^t+t_p+t_r\right)$ is the service time for a single trip by a crowdsourcing AV.
\noindent

Aggregating Eqs. (\ref{eq1_approx})-(\ref{eq12}) yields the equilibrium model of the AV crowdsourcing market. The following subsection will specify some basic properties of the model.

\subsection{Existence of equilibrium solution} \label{sec3_4}

For the platform, on the demand side, the trip fare of on-demand ride service $F_o$ is treated as the decision variable; on the supply side, the commission fee per ride $p$ is treated as the decision variable. When fixing these two decision variables, the utility equations along with Eqs. (\ref{eq1_approx})-(\ref{eq12}) together constitute a system of nonlinear equations that can be used to obtain the equilibrium state of the AV crowdsourcing market, i.e., the intermediate variables including choice splits of different travel modes, on-demand ride service rate and total travel costs. \par

We examine the existence of the equilibrium solution in this subsection. The statement is presented as follows.

\begin{prop}
The system of nonlinear equations (\ref{eq1_approx})-(\ref{eq12}) has at least one solution when fixing $F_o$ and $p$.
\end{prop}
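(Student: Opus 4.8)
The plan is to recast the equilibrium system as a fixed-point problem on a compact convex set and invoke Brouwer's fixed-point theorem. First I would reduce the unknowns to a vector $\mathbf{z}=(w^t,q_o,q_a,q_m,q_p)$ and observe that, with $F_o$ and $p$ fixed, every remaining quantity in (\ref{eq1_approx})--(\ref{eq12}) is an explicit continuous function of $\mathbf{z}$: the times $t_r,t_p,w^c$ are recovered from (\ref{eq3_approx}), (\ref{eq2_approx}), (\ref{eq1_approx}); combining (\ref{eq11}) and (\ref{eq12}) gives the service rate $n_0=h/(w^t+t_p+t_r)$; the utilities $V_x^{(i,j)}$ are affine in $(t_r,w^c,t_p,n_0)$; the logit shares (\ref{eq4}) are smooth in those utilities; and $N,N_r$ follow from (\ref{eq5})--(\ref{eq6}). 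Substituting these back into (\ref{eq7})--(\ref{eq10}) yields updated demands, while (\ref{eq12}) yields an updated $w^t=N/q_o-t_p-t_r$. This defines a map $\Phi(\mathbf{z})$, and a fixed point of $\Phi$ satisfies every equation of the system (at $\hat{\mathbf z}=\mathbf z$ one recovers (\ref{eq12}), and then $n_0=q_oh/N$ restores (\ref{eq11})).

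Next I would exhibit a compact convex box $\mathcal{D}$ invariant under $\Phi$. The demand coordinates are automatically confined, since each $q_o h,q_a h,q_m h,q_p h$ produced by (\ref{eq7})--(\ref{eq10}) is a nonnegative combination of logit shares weighted by the fixed class sizes $d_x$, hence bounded by $\sum_x d_x$. The decisive point is that the continuous approximations $\xi_1(w^t)$ and $\xi_2(w^t q_o)$ replace the singular negative powers by bounded continuous functions, so $w^c$ and $t_p$ stay finite and continuous even as $w^t$ or $q_o$ approaches $0$ (the residual factor $q_o^{(1-\alpha_1-\alpha_2)/\alpha_2}$ being continuous and bounded on the relevant range under the usual condition $\alpha_1+\alpha_2\le 1$, or otherwise truncated by the same device); together with convexity and monotonicity of $T_s$ this confines $t_r$ to an interval $[t_r^{\min},t_r^{\max}]$ with $t_r^{\min}>0$. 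Finiteness of all times makes the utilities finite, so the logit shares are uniformly bounded away from $0$ and $q_o$ admits a positive lower bound $q_{\min}$; since $N=N_s+N_r$ is bounded, (\ref{eq12}) then pins $w^t$ to a bounded range. As $\Phi$ is a finite composition of continuous maps, it is continuous, and Brouwer's theorem delivers a fixed point $\mathbf{z}^\ast$, establishing existence.

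The step I expect to be the main obstacle is verifying the self-map property at the $w^t$ coordinate rather than continuity or compactness. The update $w^t=N/q_o-t_p-t_r$ relies on $q_o\ge q_{\min}>0$ and can in principle be negative before equilibrium is reached, so I would either project it onto the admissible interval $[0,W_{\max}]$—projection onto a convex set being continuous and preserving the Brouwer hypotheses—or fold (\ref{eq11})--(\ref{eq12}) in as balance conditions that fix $w^t$ implicitly. The delicate part is then arguing that the physical bounds are slack at $\mathbf{z}^\ast$, so that the fixed point solves the genuine equations rather than a clipped version; this hinges precisely on the strictly positive lower bound $q_{\min}$ for the on-demand share, which is itself a consequence of the finiteness secured by the $\xi_1,\xi_2$ approximation. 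Everything else reduces to routine continuity and boundedness checks.
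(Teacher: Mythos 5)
Your overall strategy is the one the paper uses: smooth the singular negative powers with $\xi_1,\xi_2$, recast the system as a continuous self-map of a compact convex box, and invoke a fixed-point theorem (Brouwer versus Schauder is immaterial in finite dimensions). The bounding of the demand coordinates, of $t_r$ via the boundedness of the congestion input, and of $n_0$ via the positive lower bound on $N$ all match the paper's reasoning. The difference that matters is your choice to put $w^t$ into the state vector and force it into $[0,W_{\max}]$.

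That is where the gap is. You correctly flag the self-map property of the update $w^t = N/q_o - t_p - t_r$ as the main obstacle, but your proposed resolution --- clip to $[0,W_{\max}]$ and then argue the clip is slack at the fixed point because $q_o \ge q_{\min} > 0$ --- does not go through. A positive lower bound on $q_o$ works \emph{against} you here: it caps $N/q_o$ from above, so nothing prevents $N/q_o - t_p - t_r < 0$ at the fixed point (this is exactly the supply-shortage regime in which the fleet-time identity (\ref{eq12}) cannot be met with a nonnegative search time, and with the smoothed $\xi_1$ the resulting finite $w^c$ need not suppress demand enough to restore feasibility). In that case your fixed point solves the clipped system, not (\ref{eq12}), and the argument stalls. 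The paper sidesteps this entirely by choosing the state vector $(t_r, n_0, t_p, w^c)$ and \emph{not} requiring $w^t \ge 0$: $w^t$ is treated as an intermediate quantity that may be negative, the only bound needed is $w^t = N/q_o - t_p - t_r \ge -t_p - t_r \ge -2M_1$, and the linear extension in $\xi_1$ keeps $W^c$ finite and continuous on that range, which is enough to bound $\hat{w}^c$ and close the box. Since the proposition asserts existence for the smoothed system (\ref{eq1_approx})--(\ref{eq12}), a solution with negative $w^t$ is admissible and no slackness argument is required. To repair your version, either drop the nonnegativity requirement on $w^t$ as the paper does, or supply an actual argument (which $q_{\min}>0$ alone does not provide) that the projection is inactive. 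A minor additional point: $\xi_1,\xi_2$ are not globally bounded --- their linear extensions diverge as the argument tends to $-\infty$ --- so their boundedness must be stated relative to the compact box rather than absolutely.
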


\begin{proof}

We apply the fixed point theorem to prove the existence. The proof is based on Schauder's fixed-point theorem: let $\mathcal{C}$ be a closed convex subset of the Banach space and suppose $f:\mathcal{C} \mapsto \mathcal{C}$ and $f$ are compact (i.e., bounded sets in $\mathcal{C}$ are mapped into relatively compact sets), and then $f$ has a fixed point in $\mathcal{C}$. \par

Let the constant parameters $d_{mv}$ and $d_{av}$ respectively denote the total number of MVs and AVs owned by the citizens. We focus on four variables $(t_r, n_0, t_p, w^c)$, and the system (\ref{eq1_approx})-(\ref{eq12}) can be treated as a mapping $\Phi$ such that $(\hat{t}_r, \hat{n}_0, \hat{t}_p, \hat{w}^c) = \Phi (t_r, n_0, t_p, w^c)$, and the equilibrium requires $(\hat{t}_r, \hat{n}_0, \hat{t}_p, \hat{w}^c) = (t_r, n_0, t_p, w^c)$. The mapping works as follows: it first computes all choice splits with the logit model Eq. (\ref{eq4}), and obtain $\hat{n}_0$ by Eq. (\ref{eq11}), obtain $w^t$ by Eq. (\ref{eq12}); and then with $w^t$ we obtain $\hat{t}_r$, $\hat{t}_p$ and $\hat{w}^c$ by Eqs. (\ref{eq1_approx})-(\ref{eq3_approx}). Clearly, the mapping $\Phi$ is continuous. To proceed, we first prove the following result. \par

\textbf{Claim}: \underline{when $\max(t_r, t_p) \rightarrow +\infty$, $\hat{t}_r, \hat{t}_p$ are upper bounded by constants.}\par

To prove this claim, we note a simple fact that when $t_r$ or $t_p$ approaches infinity, $q_o$ will approach zero with an exponential rate due to the presence of constant-utility public transit, so that $q_o(t_r + t_p)$ will also approach zero. Therefore, by Eq. (\ref{eq12}), $q_o w^t \approx N$. Meanwhile, $N$ is lower bounded by a positive constant due to the presence of citizens with types 4 and 6 and the fact that $n_0 \ge 0$. Thus, $q_o w^t$ is lower bounded by a constant. Meanwhile, $q_m$ and $q_a$ are upper bounded by $d_{mv}$ and $d_{av}$, respectively. Combining the above facts, based on Eqs. (\ref{eq2_approx}) and (\ref{eq3_approx}), we know that $\hat{t}_r, \hat{t}_p$ are upper bounded by constants. The claim is proved.\par

On the other hand, $\hat{n}_0$ is always upper bounded by a constant because $N$ is lower bounded by a positive constant, and with Eq. (\ref{eq11}), we can easily identify the bounded nature of $\hat{n}_0$. \par

Now, we define the ranges of $t_r$, $t_p$ and $n_0$ to be within $[0,M_1]$, where $M_1$ is a large positive number. We show that $\hat{w}^c$ is upper bounded by a constant; with Eq. (\ref{eq1_approx}), this is equivalent to showing that $w^t$ is lower bounded by a constant. By Eq. (\ref{eq12}), we obtain:

\begin{align*}
& w^t = \frac{N}{q_o} - t_p - t_r \ge - t_p - t_r \ge - 2M_1
\end{align*}

\noindent and with which the upper boundedness of $\hat{w}^c$ is confirmed.\par

Finally, we define the feasible ranges of $t_r$, $t_p$, $w^c$ and $n_0$ to be all within $[0,M_1]$, and this region is denoted as $\mathcal{D}$. When $M_1$ is sufficiently large, by the above reasoning we know that $\hat{w}^c$ and $\hat{n}_0$ are upper bounded by constants smaller than $M_1$, and based on the claim stated above as well as the continuity of $\Phi$, we know that $\hat{t}_r, \hat{t}_p$ are also upper bounded by constants smaller than $M_1$. Therefore, $\Phi (t_r, n_0, t_p, w^c) \in \mathcal{D}$, and by Schauder's fixed-point theorem the existence is guaranteed. The proof is completed. 

\end{proof}

\section{Scenario Analysis} \label{sec4}

Using the equilibrium model proposed in the previous section, we can then proceed to analyze several scenarios with different objectives for decisions regarding the platform payment $p$ and the platform charge $F_o$. The scenarios of interest include a monopoly scenario, a first-best scenario and a second-best scenario. The first intends to maximize the crowdsourcing platform’s profit; the second intends to maximize social welfare; and the third intends to maximize social welfare under the constraints that the crowdsourcing platform's profit is guaranteed to be no less than a preset threshold. For practical consideration, in this section we assume that the time-of-day can be divided into several heterogeneous periods according to the levels of demands. In the following, we use $\mathcal{P}$ to denote the set of all homogeneous periods, and $H_k$ represents the duration of the $k^{th}$ period, $\forall k \in \mathcal{P}$. Below, we first present the optimization models in the above-mentioned three scenarios, and then develop the algorithms for solving these models.\par

In this section, we use a tuple $\boldsymbol{\tau}$ to collectively denote all of the equilibrium variables defined in the last section; these are treated as intermediate variables in the established optimization models.

\subsection{Monopoly scenario} \label{sec4_1}

In the monopoly scenario, the crowdsourcing platform decides some key variables to maximize its daily profit. The decision variables include the period-specific average fare for the on-demand ride service $F_{o,k}$ and the period-specific payment to AV owners per customer $p_k$. The total number of purchased AVs, i.e., $N_{s}$, is a parameter that is not considered as a decision variable. Let $q_{o,k}$ denote the period-specific unit-time customer demand for the on-demand ride service; $N_{r,k}$ denote the period-specific number of rented AVs; and $n_{0,k}$ denote the period-specific number of customers served by a crowdsourcing AV per period of duration $h$. The optimization model for the profit maximization problem is then stated as: 

\begin{align}
& \max_{\mathbf{F}_o, \mathbf{p}, \boldsymbol{\tau}}  \ \left\{ \sum_{k \in \mathcal{P}}\frac{H_k}{h}\left(F_{o,k} q_{o,k} h - N_{r,k}p_k n_{0,k}\right) \right\} -N_{s}\left(g+z\right)-C_f \label{eq13} \\
& \mathrm{s.t.} \ (\ref{eq1_approx})-(\ref{eq12}) & \forall k \in \mathcal{P} \nonumber \\
& F_{o,k}, p_k \ge 0 & \forall k \in \mathcal{P} \label{eq13_1} 
\end{align}

\noindent
where $g$ and $z$ are the unit-day amortized purchase cost and maintenance cost of an AV respectively; and $C_f$ represents the basic operational cost of the crowdsourcing platform per day.

In the objective function (\ref{eq13}), $H_k F_{o,k} q_{o,k}$ represents the revenue of the platform in the $k^{th}$ period, and $\frac{H_k}{h} N_{r,k}p_k n_{0,k}$ is the total payment to the AV owners in the $k^{th}$ period. Eq. (\ref{eq13_1}) states that the fares and payments should be nonnegative. It should be noted that the constraints (\ref{eq1_approx})-(\ref{eq12}) are period-specific; i.e., for each period $k$, there is a corresponding constraint set stating the equilibration of the AV crowdsourcing market.

\subsection{First-best scenario} \label{sec4_1}

Now we consider the case that a government agency is operating the crowdsourcing platform, and its goal is to maximize social welfare. This scenario is called the first-best scenario. By deciding the period-specific fare $F_{o,k}$ and payment $p_k$, the optimization model for the operating agency can be written as:

\begin{align}
& \max_{\mathbf{F}_o, \mathbf{p},\boldsymbol{\tau}}  \sum_{k \in \mathcal{P}} \frac{H_k}{h}S_k \label{eq14} \\
& \mathrm{s.t.} \ (\ref{eq1_approx})-(\ref{eq12}), (\ref{eq13_1}) & \forall k \in \mathcal{P} \nonumber
\end{align}

\noindent
where $S_k$ represents the unit-time social welfare in the $k^{th}$ period, and the length of such time duration is $h$. The constraints are the same as in the monopoly scenario. The social welfare is composed of the total time cost by various transport modes and an additional cost of sharing private AVs with other people; its detailed form is given by:

\begin{align}
& S_k = \sum_{x \in \{1,2,\dots,6\}} \frac{1}{\mu_{x}} d_{x,k} \left( \ln \sum_{(i,j) \in \mathcal{C}_x} e^{\mu_x V_{x,k}^{(i,j)}} \right)+\frac{H_k}{h}\left(F_{o,k} q_{o,k} h - N_{r,k}p_k n_{0,k}\right)& k \in \mathcal{P}\label{eq15}
\end{align}
In Eq. (\ref{eq15}),  $\frac{1}{\mu_{x}} d_{x,k} ( \ln \sum_{(i,j) \in \mathcal{C}_x} e^{\mu_x V_{x,k}^{(i,j)}})$ is the total utility of class $x$ citizens, where $\mu_x$ is the corresponding scale parameter; $d_{x,k}$ is the period-specific population of class $x$ citizens; and $V_{x,k}^{(i,j)}$ is the period-specific (dis)utility of choice $(i,j)$ by class $x$ citizens. The social welfare should exclude monetary utility, and therefore we compensate by adding $\frac{H_k}{h}\left(F_{o,k} q_{o,k} h - N_{r,k}p_k n_{0,k}\right)$ in calculating the welfare terms.

\subsection{Second-best scenario} \label{sec4_3}

From the platform's perspective, the on-demand ride service can produce a deficit under the first-best case, and a similar phenomenon is identified in the traditional taxi and ride-sourcing markets. Specifically, in the early stage of AV adoption, it is likely that only a few citizens own AVs, and there will be a shortage of AVs that can be rented. In this case, if the crowdsourcing platform lacks start-up fund to purchase AVs, the on-demand ride service provided by the crowdsourcing AVs will be of little attraction since it requires excessively long pick-up distances, cutting down the platform's profit. To guarantee a certain level of profitability for the service provider, in this subsection we propose a second-best operational strategy that maximizes social welfare under the constraint of a lower bound on the platform profit. The model formulation is illustrated as follows:

\begin{align}
& \max_{\mathbf{F}_o, \mathbf{p},\boldsymbol{\tau}}  \sum_{k \in \mathcal{P}} \frac{H_k}{h}S_k \label{eq16} \\
& \mathrm{s.t.} \ (\ref{eq1_approx})-(\ref{eq12}), (\ref{eq13_1}) & \forall k \in \mathcal{P} \nonumber \\
&  \sum_{k \in \mathcal{P}}\frac{H_k}{h}\left(F_{o,k} q_{o,k} h - N_{r,k}p_k n_{0,k}\right)\ge \rho \left[N_{s}\left(g+z\right) + C_f\right]  & \forall k \in \mathcal{P} \label{eq17}
\end{align}

\noindent where $\rho \ge 0$ represents the basic rate for revenue accumulation. The second-best formulation shows some differences with the previous two models in that it contains an additional nonlinear constraint with inequalities, i.e., Eq. (\ref{eq17}). The next subsection will show how to deal with this constraint.
\noindent

\subsection{Solution algorithms} \label{sec4_4}

The optimization models presented above are all nonlinear programs with complicated nonlinear constraints. To efficiently solve these models, we treat them as bi-level forms containing an equilibrium problem in the lower level, and then incorporate them into a gradient projection algorithmic framework. In detail, we first obtain the equilibrium given the fare $F_{o,k}$ and payment $p_k$; and then we calculate the gradient of the objective functions to $F_{o,k}$ and $p_k$ in the current equilibrium state, according to which we update the decision variables and project them to the feasible domain when they are to cross the boundary. Thus, the process is carried out iteratively until the termination criterion is met (Figure \ref{algorithms_fig}).
\begin{figure}[!ht]
	\centering
	\includegraphics[width=0.6\textwidth]{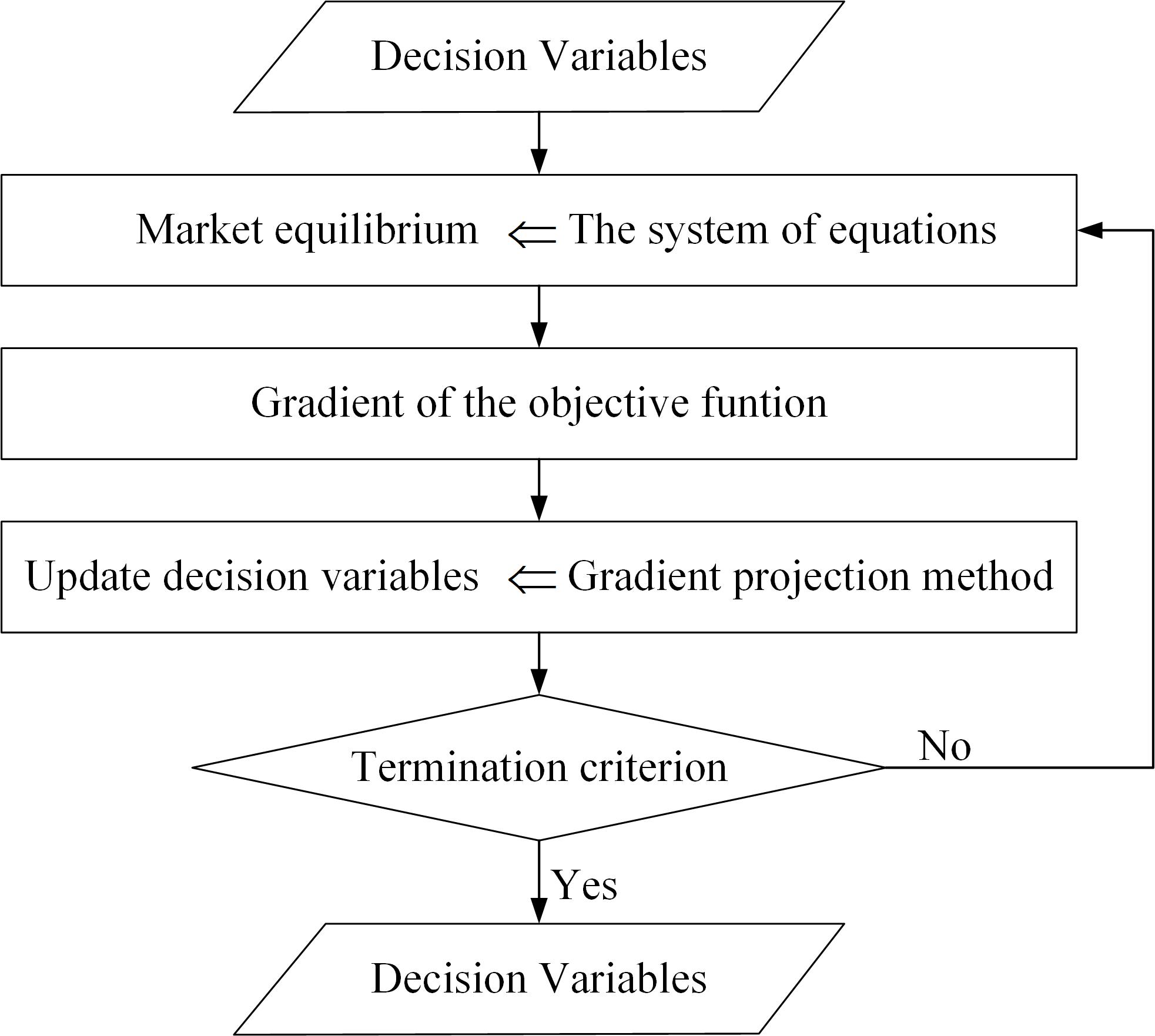}
	\caption[]{Solution algorithms} 
	\label{algorithms_fig}
\end{figure}
\par

We use the implicit function theorem to compute the gradient of equilibrium state $\nabla \mathbf{f}$  with regard to $F_{o,k}$ and payment $p_k$. Applying the chain rule, we have:

\begin{align}
& J(\frac{\partial{\boldsymbol{\tau}}}{\partial{\mathbf{F}_o}},\frac{\partial{\boldsymbol{\tau}}}{\partial{\mathbf{p}}})=-J^T(\frac{\partial{\mathbf{f}}}{\partial{\mathbf{F}_o}},\frac{\partial\mathbf{f}}{\partial{\mathbf{p}}})\cdot J^{-1}(\frac{\mathbf{\partial{f}}}{\partial{\boldsymbol{\tau}}})  
\end{align}

Here, $\mathbf{f}_u$ are the choice utility functions; $\mathbf{f}_\pi$ are the choice probability functions for which the original forms are given by Eq. (\ref{eq4}); $\mathbf{f}_n$ are the supply/demand functions for which the original forms are given by Eqs. (\ref{eq5})-(\ref{eq12}); and $\mathbf{f}_t$ are the traffic-related functions for which the original forms are given by Eqs. (\ref{eq1_approx})-(\ref{eq3_approx}). Then, the equilibrium can be compactly expressed as $\mathbf{f}=\left( \begin{matrix} \mathbf{f}_u \\ \mathbf{f}_\pi \\ \mathbf{f}_n \\ \mathbf{f}_t \end{matrix} \right) = \boldsymbol{0}$. 

The Jacobian matrix $J(\frac{\mathbf{\partial{f}}}{\partial{\mathbf{F}_o}},\frac{\mathbf{\partial{f}}}{\partial{\mathbf{p}}})$ can be obtained as follows. With the exception of $\mathbf{f}_u$ that is linearly related with $\mathbf{F}_o$ and $\mathbf{p}$, all other functions in $\mathbf{f}$ have zero partial derivatives with regard to $\mathbf{F}_o$ and $\mathbf{p}$. To obtain the Jacobian matrix $J(\frac{\mathbf{\partial{f}}}{\partial{\boldsymbol{\tau}}})$, we observe that $\mathbf{f}_u$ and $\mathbf{f}_n$ are linearly related with $\boldsymbol{\tau}$. For probability functions, $\mathbf{f}_\pi$, we have
\begin{align}
& \frac{\partial{f_{\pi,x}^{(m,n)}}}{\partial{V_{x}^{(m,n)}}}=\mu_x \  e^{\mu_x V_{x}^{(m,n)}}\ \frac{ \displaystyle\sum_{(i,j)\in \mathcal{H}_{x}} {e^{\mu_x V_{x}^{(i,j)}} }-{e^{\mu_x V_x^{(m,n)}}} }{ \left(\displaystyle\sum_{(i,j)\in \mathcal{H}_{x}}e^{ \mu_x V_{x}^{(i,j)}}\right)^2} \\
& \frac{\partial{f_{\pi,x}^{(m,n)}}}{\partial{V_{x}^{(w,z)}}}=-\mu_x \  e^{\mu_x V_{x}^{(m,n)}}\frac{ e^{\mu_x V_{x}^{(w,z)}} }{ \left(\displaystyle\sum_{(i,j)\in \mathcal{H}_{x}}e^{\mu_x V_{x}^{(i,j)}}\right)^2}
\end{align}
where $(w,z)\in \mathcal{C}_{x}$, $(m,n)\in \mathcal{C}_{x}$ and $(w,z) \ne (m,n)$. The traffic-related functions $\mathbf{f}_t$ are piece-wise functions consisting of a linear function and a power function. $\frac{\partial\mathbf{f}_t}{\partial\boldsymbol{\tau}}$ can be obtained using the derivative rule for complex functions. 
\noindent

In the second-best scenario, there exists a nonlinear constraint Eq. (\ref{eq17}) that can make the gradient projection method proposed above ineffective. To handle this issue, we introduce a Lagrange multiplier $\lambda$ to transfer the constraint into a term in the objective function. Specifically, the altered objective function can be written as: 

\begin{align}
&\max_{\mathbf{F}_o, \mathbf{p},\boldsymbol{\tau}}  \sum_{k \in \mathcal{P}} \frac{H_k}{h}S_k + \lambda \left\{\left[ \sum_{k \in \mathcal{P}}\frac{H_k}{h}(F_{o,k} q_{o,k} h - N_{r,k}p_k n_{0,k}) \right] -\rho \left[N_{s}\left(g+z\right)+C_f\right] \right\}\label{eq18}
\end{align}

Since the Lagrange multiplier $\lambda$ is unknown prior to solving the problem, in the solution process we must adjust its value accordingly in order to obtain a maximum objective function value while ensuring the feasibility of the constraint. In most cases, the optimal solution is identified at the points where $\sum_{k \in \mathcal{P}}\frac{H_k}{h}\left(F_{o,k} q_{o,k} h - N_{r,k}p_k n_{0,k}\right) = \rho\left[N_{s}\left(g+z\right)+C_f\right]$.

\section{Numerical Examples}\label{sec5}

We report the equilibrium results under different scenarios in this section. 

\subsection{Test settings}\label{sec5_1}

We consider a city with the area of approximately 400 $\text{km}^2$. The population density of this city is $5,000$ persons per square kilometer, and $30\%$ of the residents of the city own MVs or AVs. We set the AV ownership as approximately $3\%$; i.e., the number of AVs owned by citizens is equal to $3\%$ of the population. The time of a day can be divided into two periods according to travel demand: peak and off-peak hours that last $4 \ \mathrm{h}$ and $20 \ \mathrm{h}$, respectively. In the peak-hours, the trip generation per hour equals $15\%$ of total population; in the off-peak hours, the trip generation per hour equals $5\%$ of total population. The model input data for the population of six types of citizens are shown in Table 3.

\begin{figure}[!ht]
	\centering
	\includegraphics[width=0.7\textwidth]{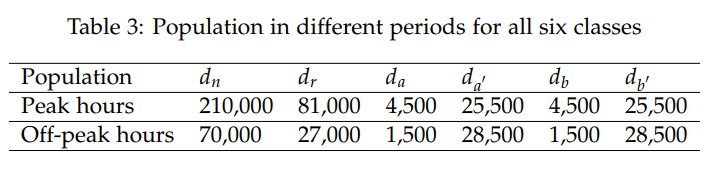}
	\label{test_fig}
\end{figure}
\noindent

AV technological maturity index $\alpha$ is set to $0.7$. The matching function of the AV crowdsourcing is captured by $A=1.5$ and $\alpha_1=\alpha_2=0.7$. The decision period is $1$ hour. An AV purchased by the crowdsourcing platform costs $\$100,000$ with 10 years’ service life and its maintenance fee is $\$5,000$ per year. The pre-purchased number of AVs by the crowdsourcing platform is $N_s=0$; i.e., all supplied AVs are crowdsourced from the AV owners. The cost incurred by the crowdsourcing platform is set to $\$ 6\times10^5$ per day. The additional cost of AV owners is set as $m=\$ 20$. Public transit is captured by the average travel time $t_n=1 \ \mathrm{h}$ and the average fare $F_n=\$6$. For simplicity, we consider $\mu_x=\mu$ for all classes of citizens. Traveler’s VOTs are $\beta_A=\$ 20/\text{h}$, $\beta_P=\$ 30/\text{h}$, $\beta_M=\$40/\text{h}$,$\gamma=\$ 30/\text{h}$. 
\noindent

Setting $\kappa=1$, the average trip time with free-flow speed by private vehicle is $30 \ \mathrm{min}$. Without loss of generality, the trip time function considering congestion $T_s(\cdot)$ is set as a quadratic function, i.e., $T_s(\cdot)=a+b[ q_m + \alpha(q_a - q_o) + \alpha q_o (1 + \theta (w^t q_o)^{-1/2} )]^2 $, and the approximation takes similar form. The basic trip time term is $a=0.5 \ \mathrm{h}$, and the congestion related parameter is $b=2.67\times10^{-12} \ \mathrm{h}$, indicating that traffic congestion can lead to a delay of at most $1 \ \mathrm{h}$. 
\subsection{Results} \label{sec5_2}
The first-best scenario reduces people's travel cost and increases the AV owners' revenue significantly (Table 4) illustrating that the first-best scenario encourages AV owners to share the private AVs with others and encourages citizens to travel by on-demand ride service. Payment in the first-best scenario is high: considering an AV owner renting out their private AV in off-peak hours every day, it requires approximately 1.12 years for the owner to recoup the purchase cost of the vehicle. This suggests that AV crowdsourcing may be a worthwhile investment for some citizens; in turn, this will promote AV acceptance by the public. Different from the current manual driving on-demand ride providers, we find the AV crowdsourcing platform can still greatly benefit from the first-best scenarios (Table 4). However, the results show that the crowdsourcing platform must cut approximately $62\%$ of its profits to achieve the first-best scenario. To balance the crowdsourcing platform's profits and the overall social welfare, we explore the second-best scenario and find a trade-off pricing strategy in which the social welfare is close to that of the first-best scenario while the crowdsourcing platform only loses approximately $35\%$ of its monopoly profits. By applying this second-best pricing strategy, it is easier to obtain the agreement of the crowdsourcing platform to  achieve a relatively high social welfare. We note that $\mu=0.1$ is set in the tests.
\begin{figure}[!ht]
	\centering
	\includegraphics[width=1\textwidth]{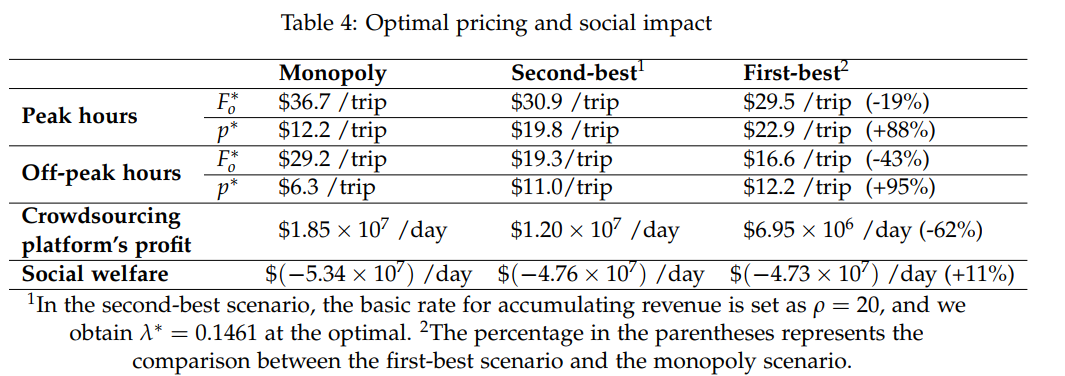}
	\label{test_fig}
\end{figure}

\begin{figure}[!ht]
	\centering
	\includegraphics[width=0.8\textwidth]{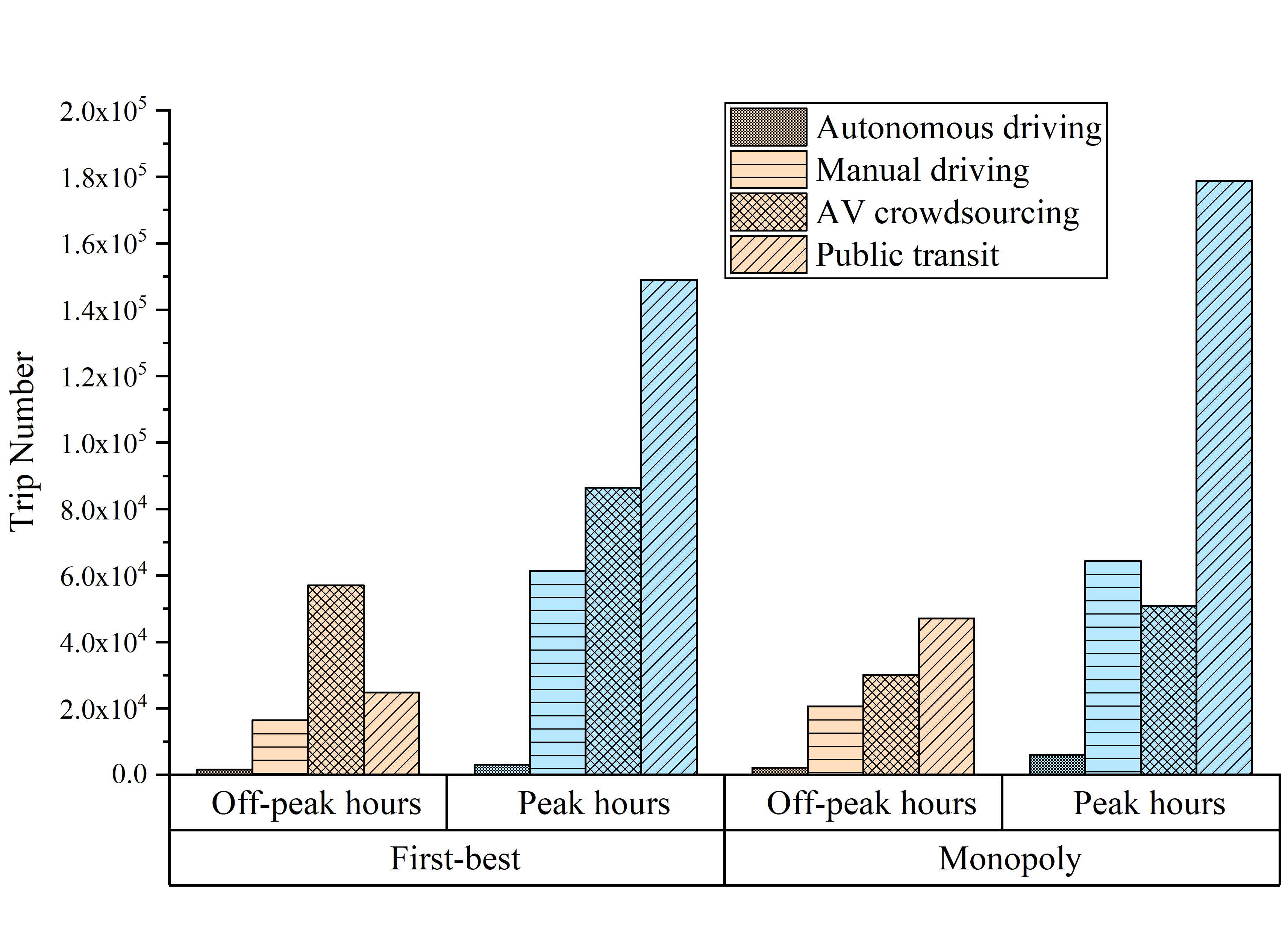}\\
	\caption[]{Trip number per day of different modes.} 
	\label{fig_modesplit}
\end{figure}
\par
\noindent

To examine the mode split in the equilibrium, one notable finding is that in the first-best scenario, the number of AV crowdsourcing services is increased by roughly $90\%$ during peak hours and by $70\%$ in the off-peak hours compared with the monopoly scenario (Figure \ref{fig_modesplit}) as a result of higher payment and lower fare. This illustrates that AV crowdsourcing service has an overall beneficial effect on the society, outweighing the congestion externality caused by deadheading. Additionally, in the first-best scenario, roughly $87.8\%$ and $76.0\%$ of the AV owners in the peak and off-peak hours who have no travel needs choose to rent AVs to the crowdsourcing platform, respectively; roughly $57.2\%$ and $42.3\% $ of the AV owners in peak and off-peak hours who have travel needs choose to rent AVs to the crowdsourcing platform, respectively. We find the payment so attractive that most people choose to rent their AVs to the crowdsourcing platform rather than drive them themselves. Other findings include that AV crowdsourcing plays a more important role in the off-peak hours, i.e., serving up to $45\%$ of the total trips for the first-best scenario; while in other cases, public transit is the most utilized transport mode. It is important to note that in the first-best scenario, people will shift to other modes from public transit due to its high time cost.

\subsection{Sensitivity analysis} \label{sec5_3}

In this section, we examine the influence of a series of model parameters on the equilibrium results, including people's recognition of utility uncertainties (i.e., the scale parameter in the logit model), population density, AV market penetration rate, AV technology maturity and the additional cost for sharing one's AVs with the society (i.e., $m$). 

\subsubsection{Scale parameter in the logit model}
\begin{figure}[h]
	\centering
	\subfloat[][]{\includegraphics[width=0.5\textwidth]{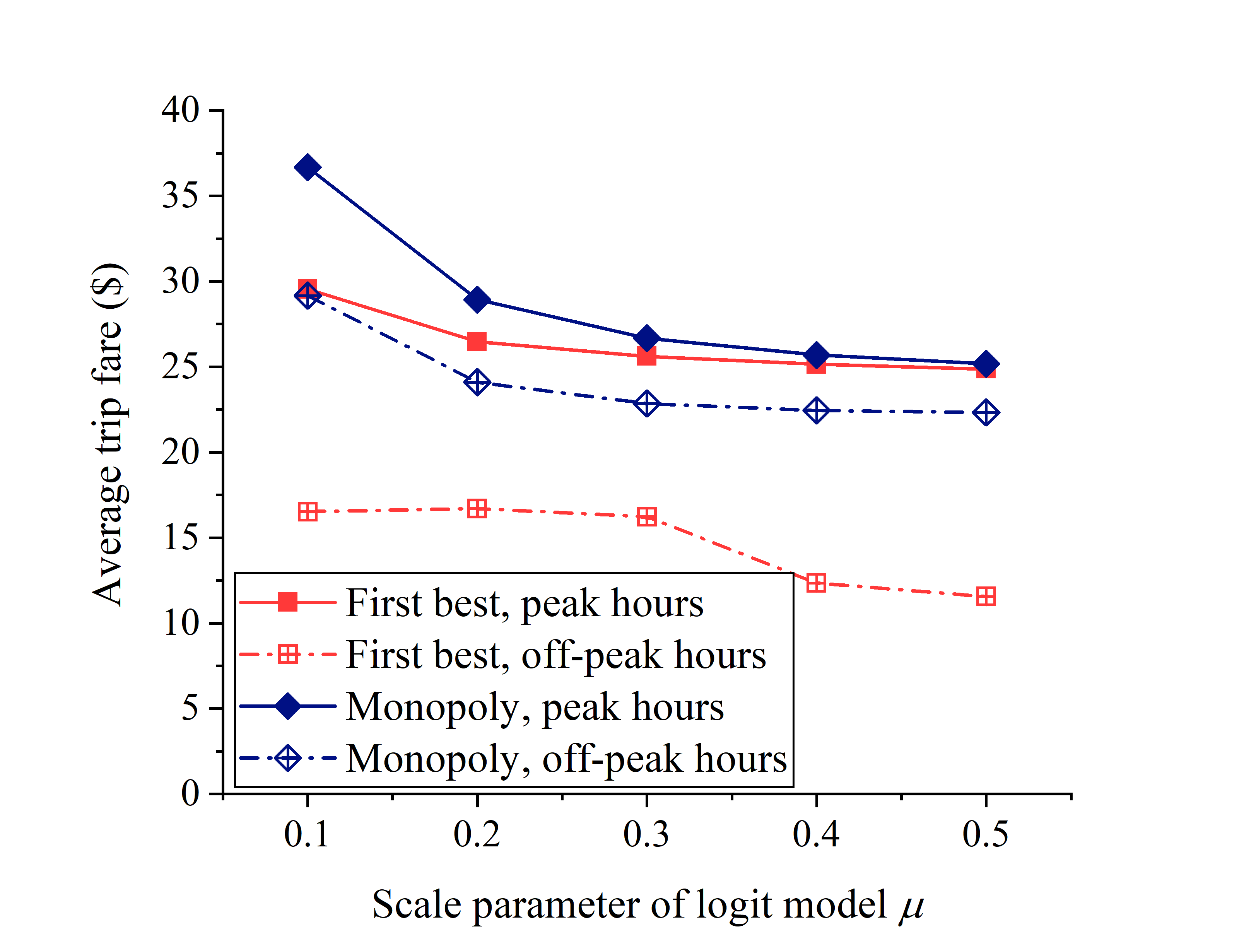}}
	\subfloat[][]{\includegraphics[width=0.5\textwidth]{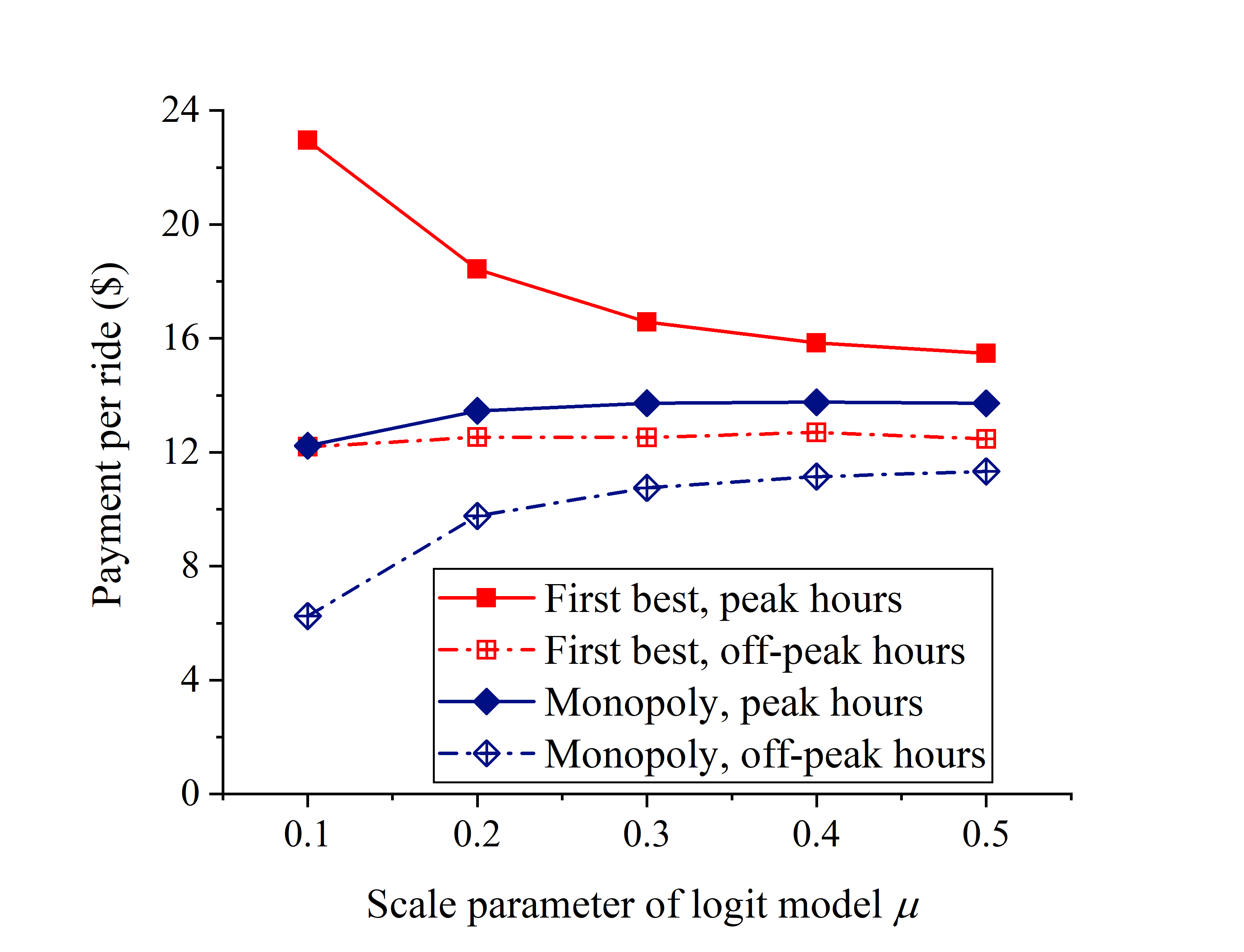}}\\
	\subfloat[][]{\includegraphics[width=0.5\textwidth]{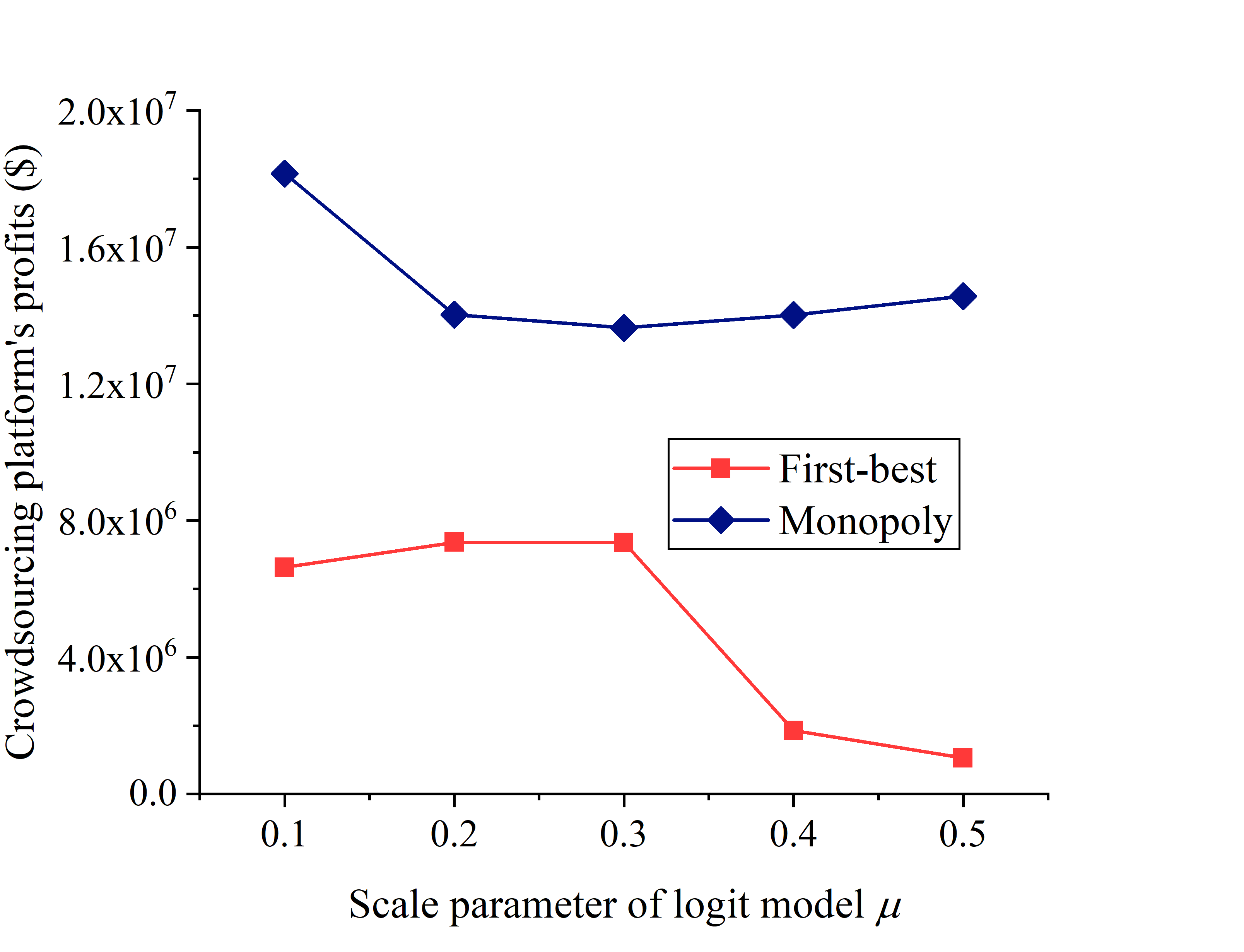}}
	\subfloat[][]{\includegraphics[width=0.5\textwidth]{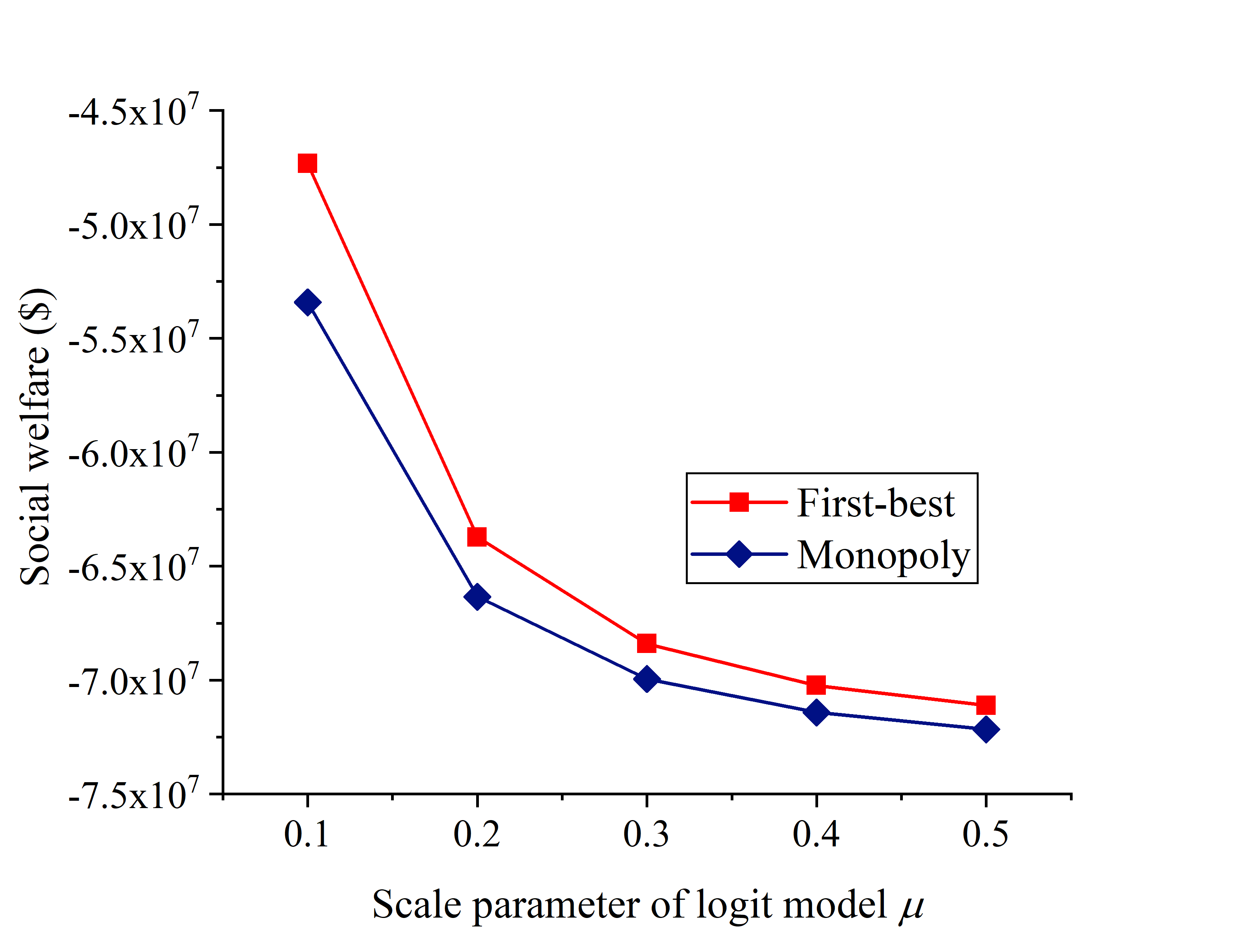}}\\
	\caption[]{Influence of people's sensitivity to utilities.} 
	\label{fig_mu}
\end{figure}
\par
\begin{figure}[h]
	\centering
	\subfloat[][]{\includegraphics[width=0.5\textwidth]{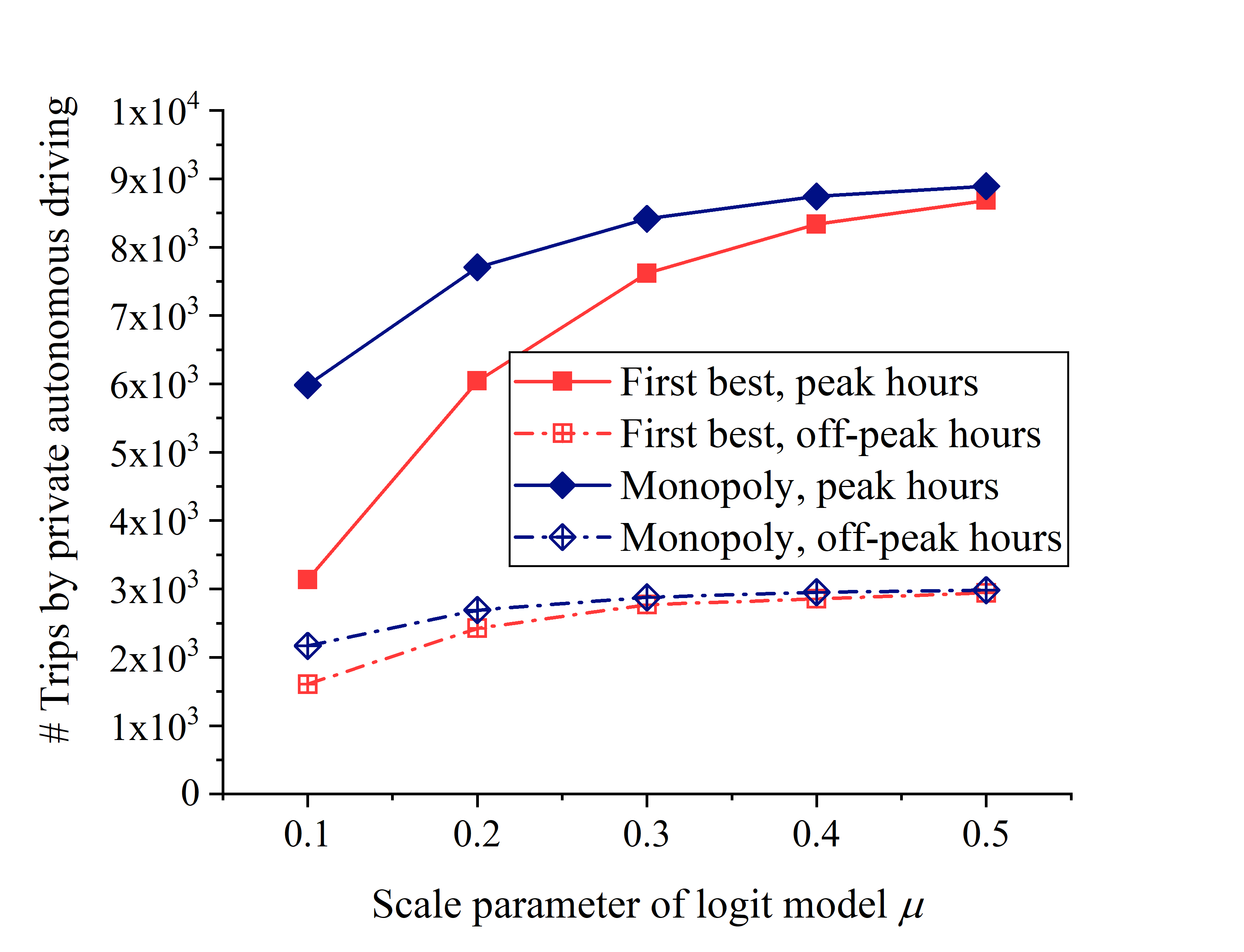}}	
	\subfloat[][]{\includegraphics[width=0.5\textwidth]{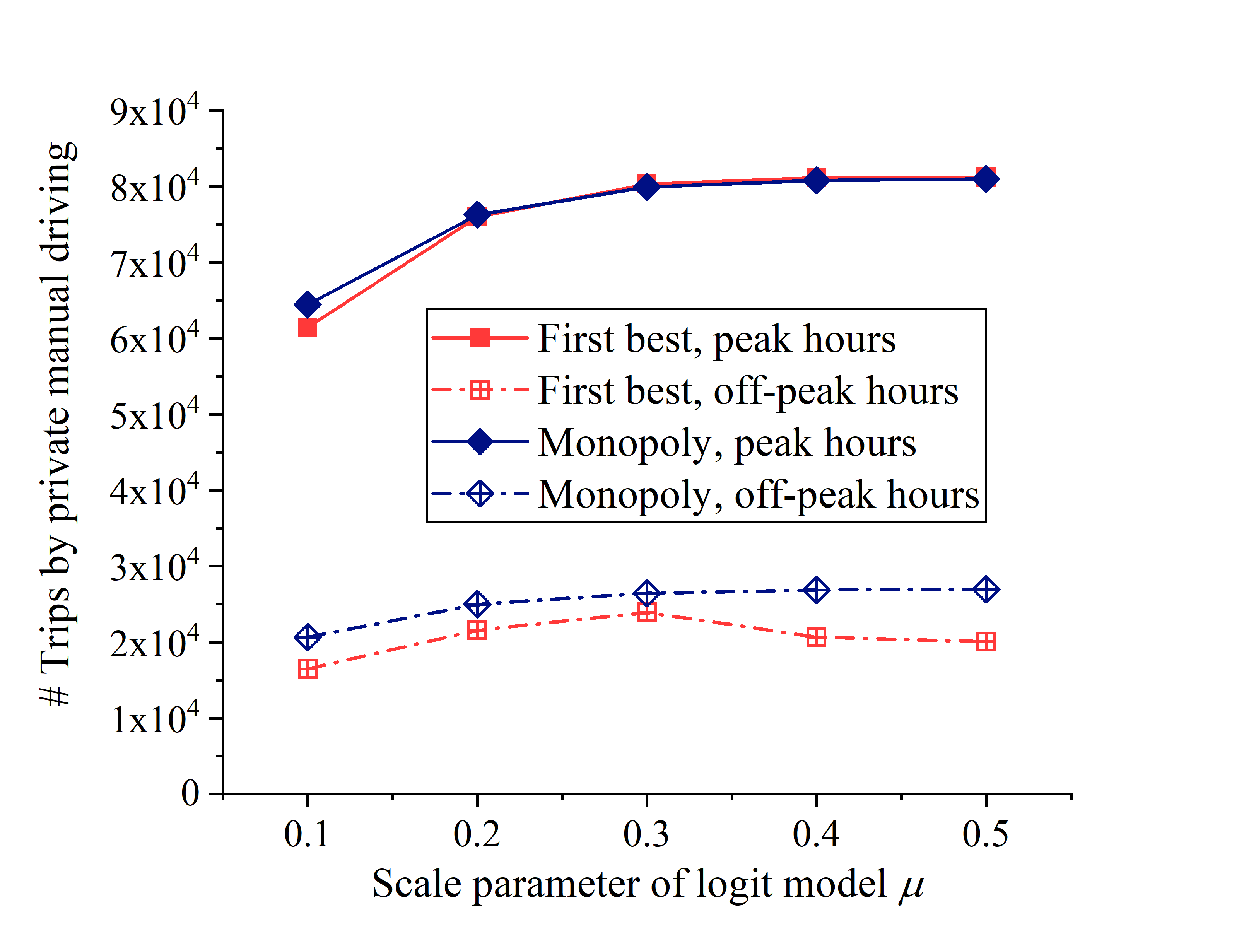}}\\	
	\subfloat[][]{\includegraphics[width=0.5\textwidth]{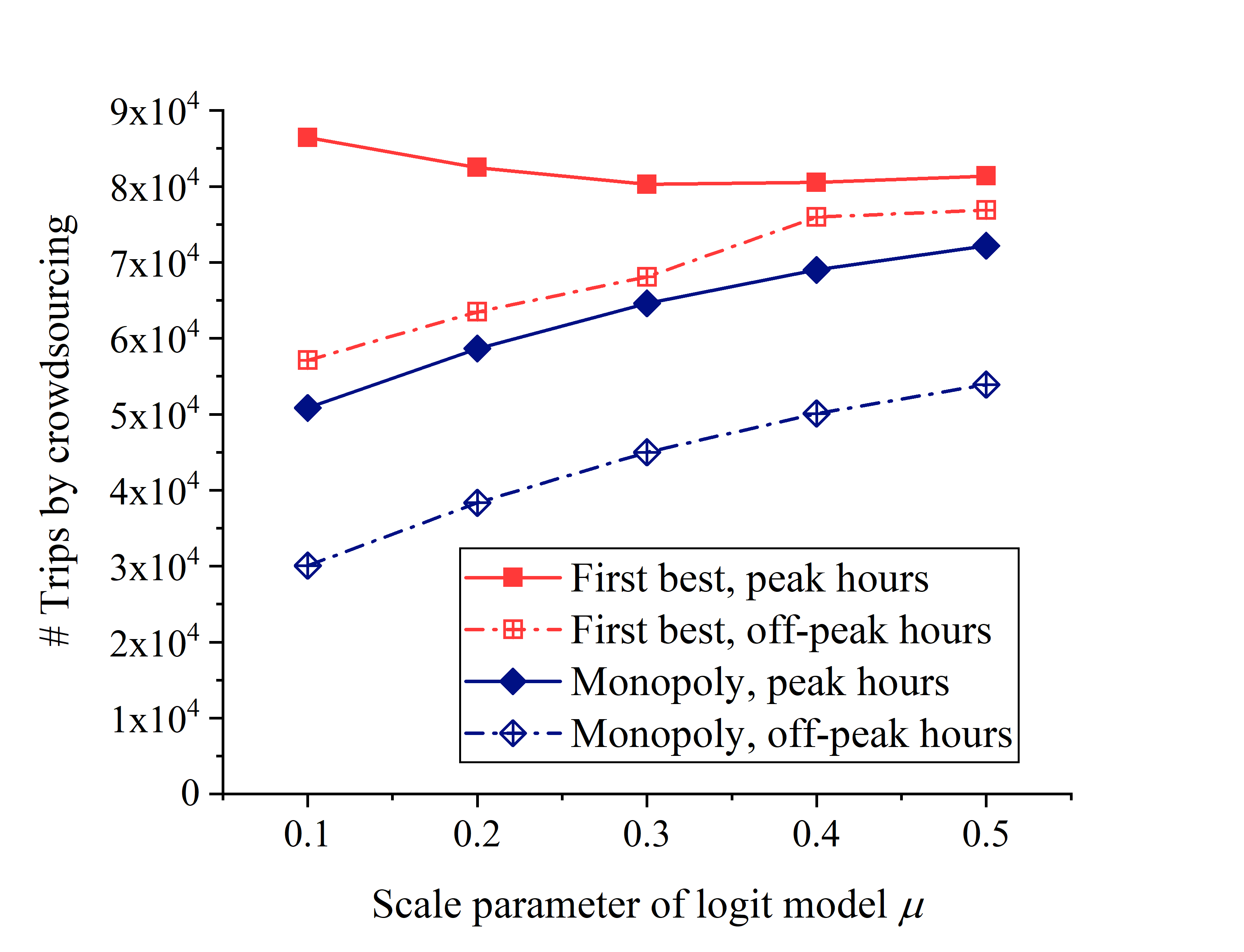}}
	\subfloat[][]{\includegraphics[width=0.5\textwidth]{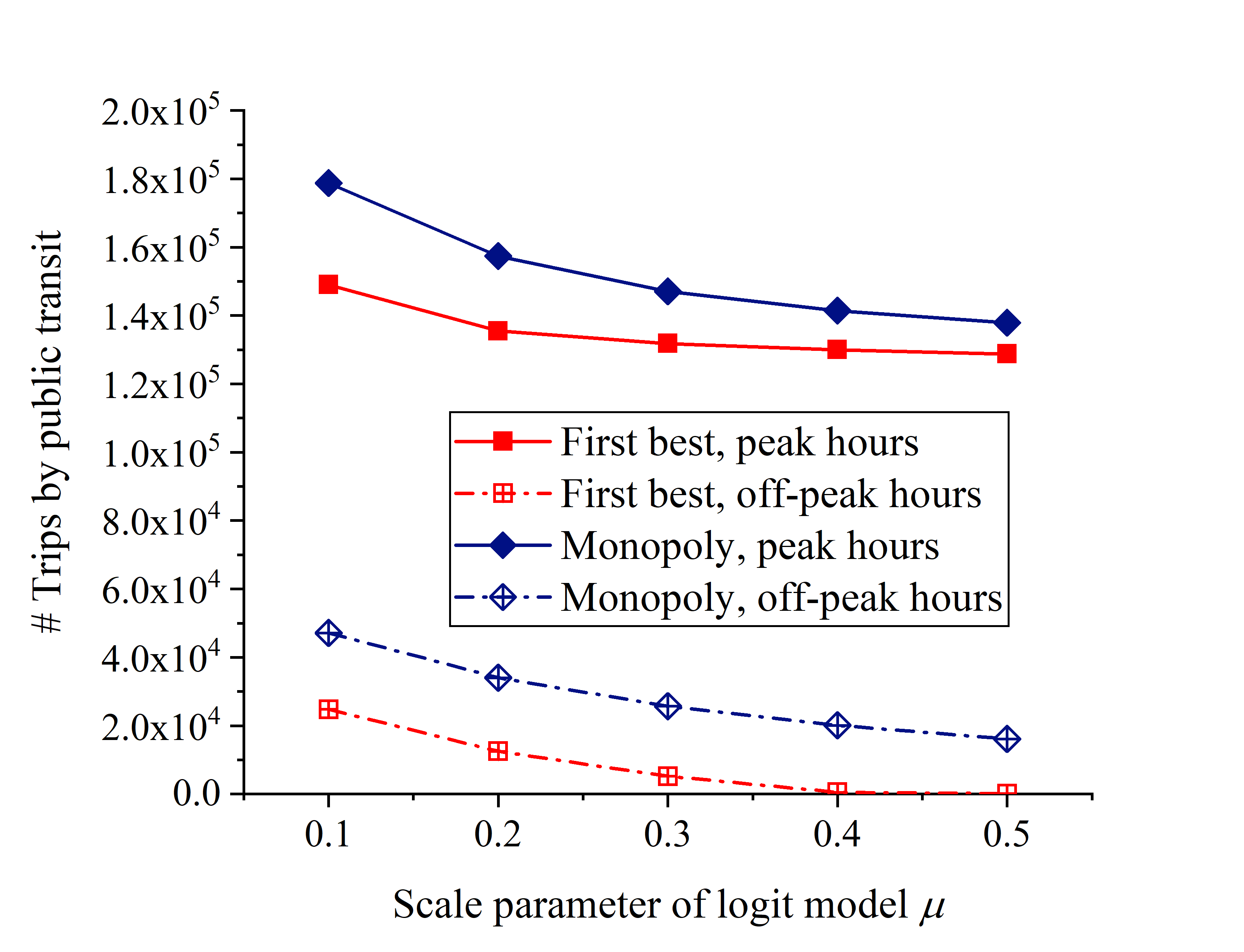}}\\
	\caption[]{Influence of people's sensitivity to utilities on mode split.} 
	\label{fig_mu2}
\end{figure}
\noindent

Without loss of generality, we assume that all classes of citizens are characterized by the same degree of uncertainty with regard to choices, i.e., $\mu_x= \mu \ (\forall x\in \{1,2,...,6\})$. As shown in Figures \ref{fig_mu}(a) and \ref{fig_mu}(b), when people are more sensitive to utilities, it will be harder for the crowdsourcing platform to attract citizens, and thus the trip fare decreases, and the payment increases. It is observed from Figures \ref{fig_mu}(c) and \ref{fig_mu}(d) that the crowdsourcing platform's profit and the social welfare decline overall. Trips by public transit decrease, implying that the shortcoming of its high time cost is amplified with increasing $\mu$, and trips by other three transport modes increase (Figure \ref{fig_mu2}). One noticeable deviation is that the payment during the peak hours for the first-best scenarios decreases (Figure \ref{fig_mu}(b)), leading to slightly fewer crowdsourcing trips and slightly greater platform profit. This indicates that more crowdsourcing trips during the peak hours are discouraged with increasing $\mu$, most likely due to the amplification of the congestion externality. Another deviation is the minor increase of the platform profits for the monopoly scenario for $\mu \ge 0.3$ (Figure \ref{fig_mu}(c)) due to an increasing number of crowdsourcing trips with a small amount of price changes. In short, these experiments show that parameter $\mu$ has a great effect on price setting and the equilibrium, and thus it is important for the crowdsourcing platform to investigate this parameter prior to setting the optimal prices in order to maximize its profits; $\mu$ also influences the urban planner's decision whether to regulate the AV crowdsourcing market, because the first-best scenarios present a more limited improvement with a higher $\mu$, i.e., $11.4\%$ for $\mu=0.1$ and only $1.5\%$ for $\mu=0.5$; thus, regulation will have a limited effect when people are highly sensitive to utilities. We assume $\mu=0.5$ in the following experiments, i.e., approximately $92.4\%$ of people make a choice with $\$5$ higher utility.

\subsubsection{Population density of the city}
\begin{figure}[h]
	\centering
	\subfloat[][]{\includegraphics[width=0.5\textwidth]{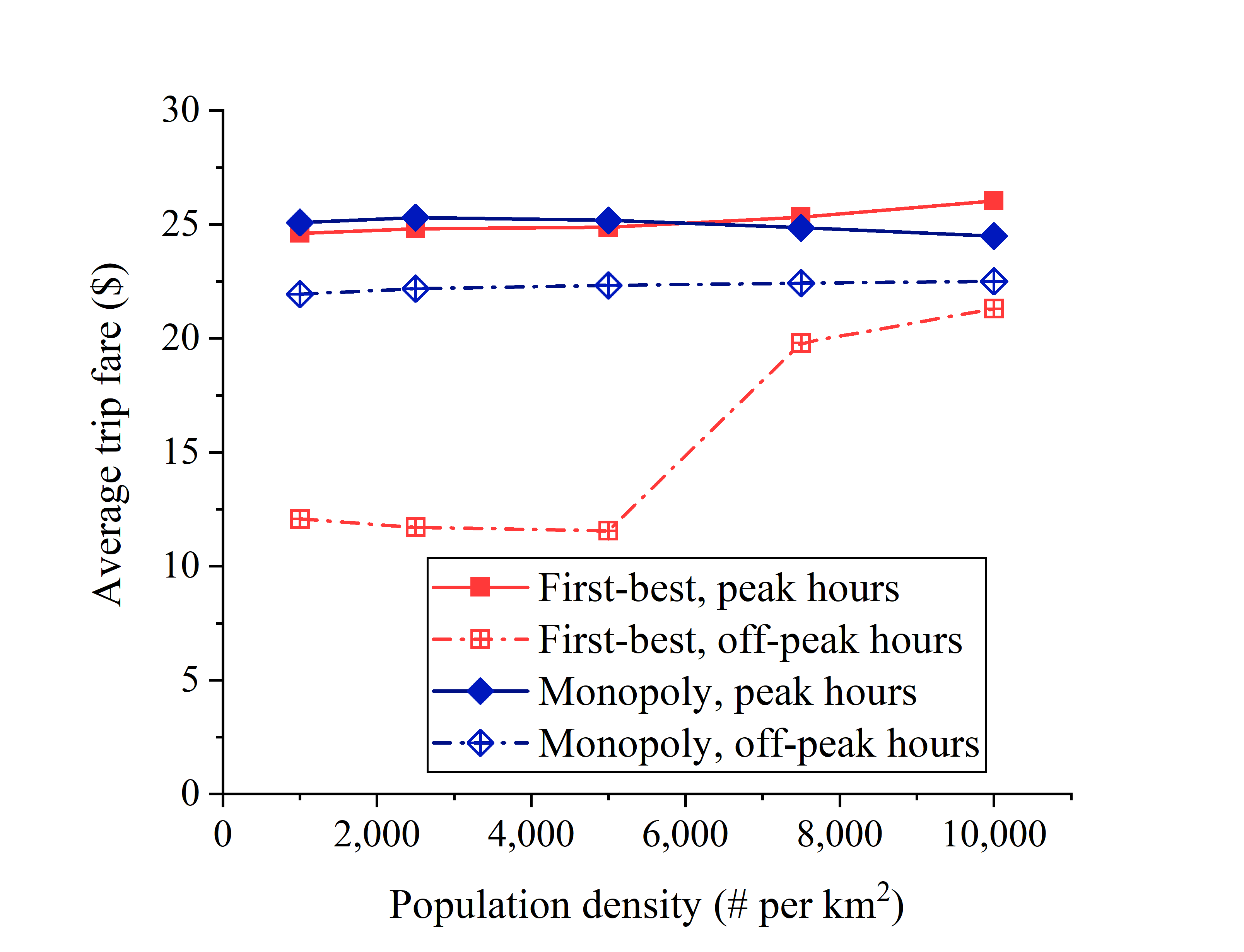}}
	\subfloat[][]{\includegraphics[width=0.5\textwidth]{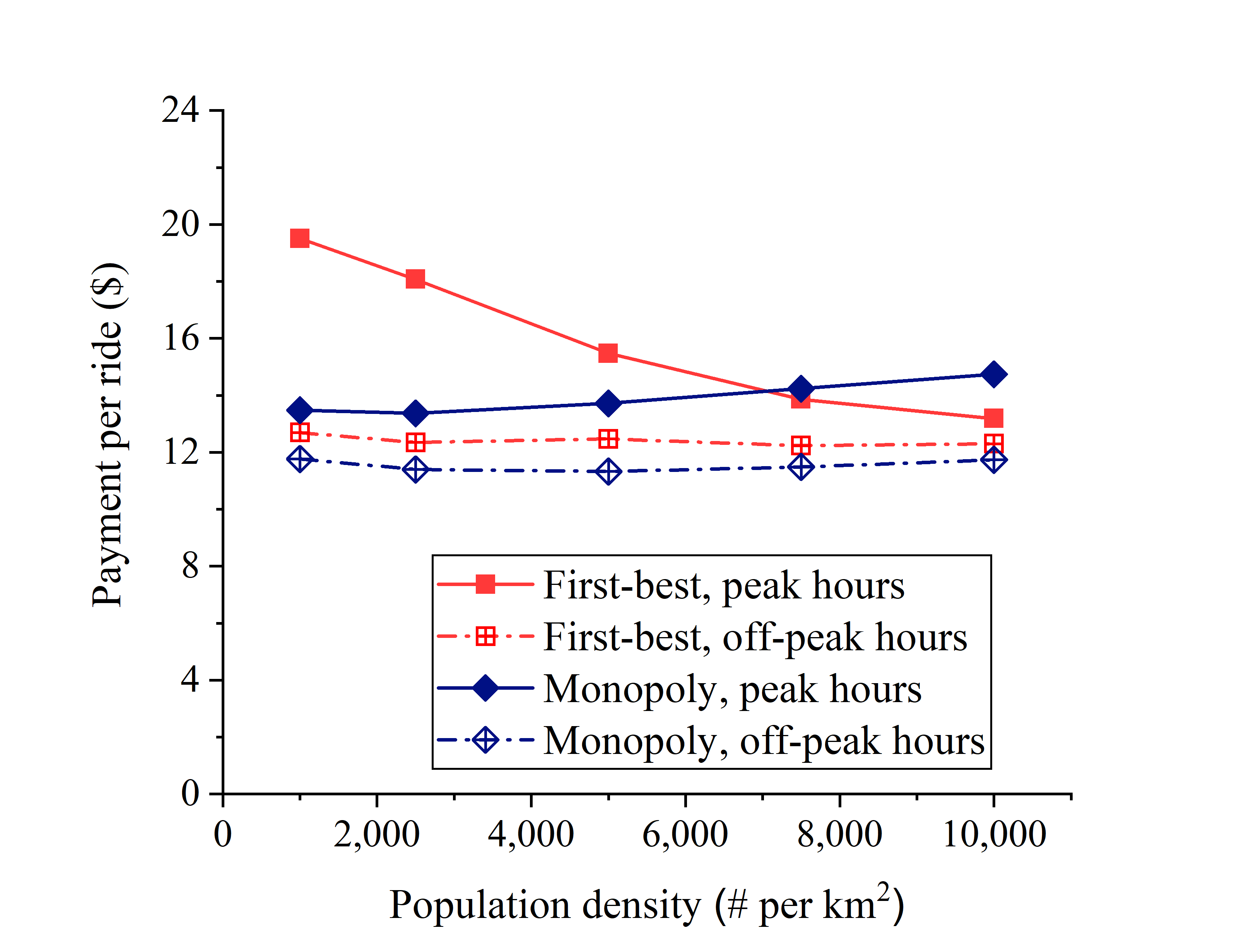}}\\
	\subfloat[][]{\includegraphics[width=0.5\textwidth]{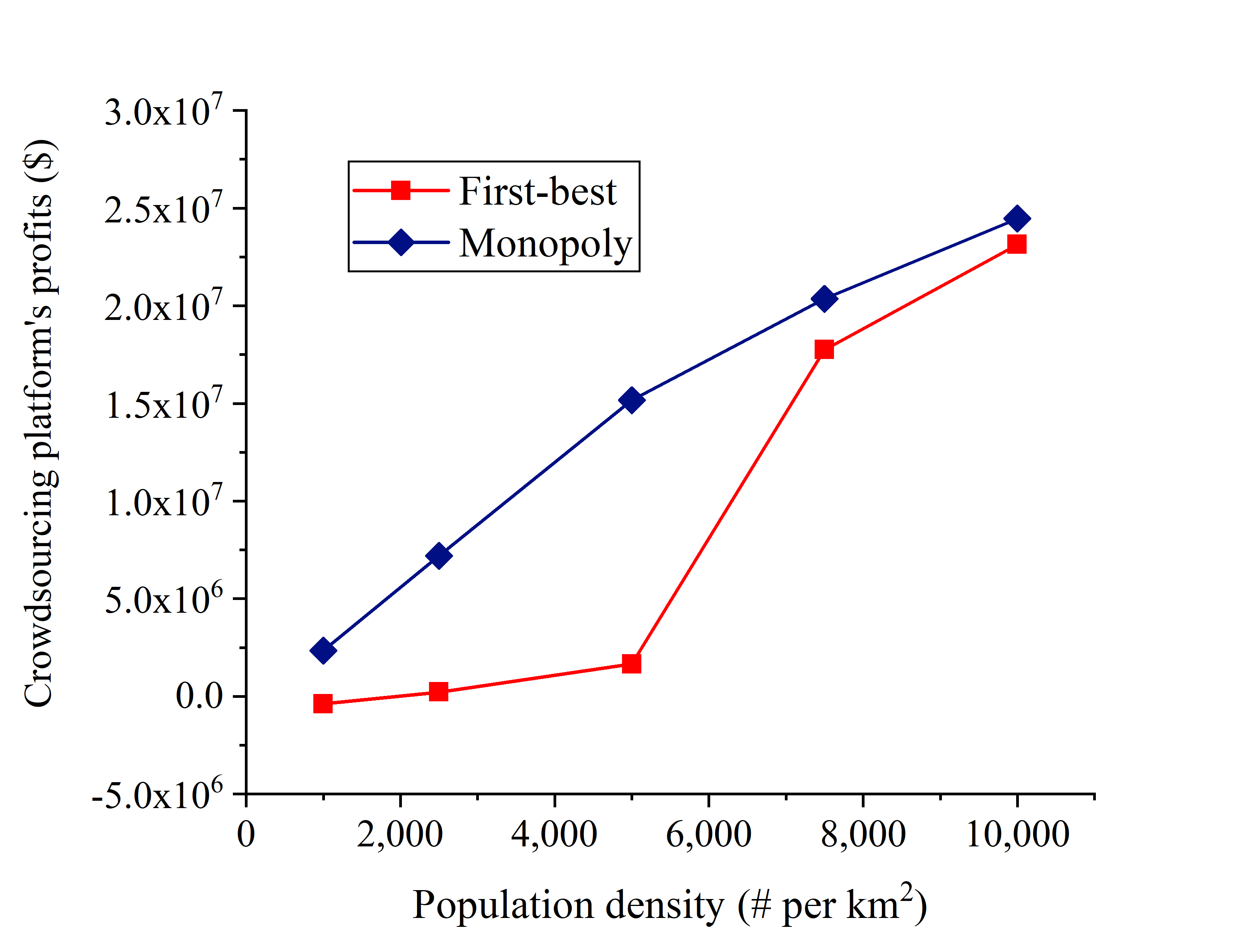}}
	\subfloat[][]{\includegraphics[width=0.5\textwidth]{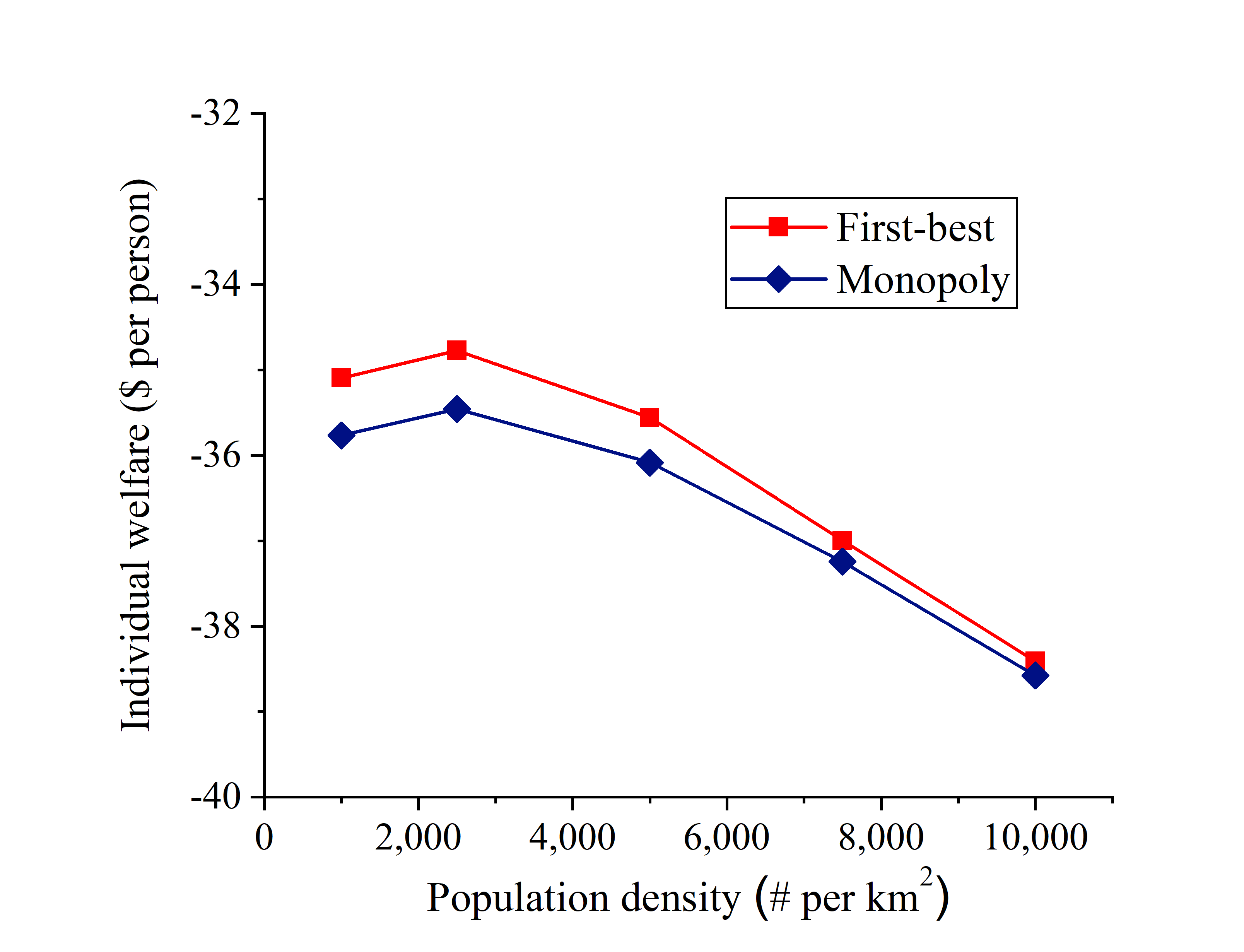}}\\
	\caption[]{Influence of population density.} 
	\label{fig_density}
\end{figure}
\noindent

Figures \ref{fig_density}(a) and \ref{fig_density}(b) show that the population density generally has little impact on prices with exceptions for a few cases, namely the payment during peak-hours declines for the first-best scenarios, and the fare during the off-peak hours increases quickly when the population density grows beyond 5000 persons per $\text{km}^2$. This discourages people to rent AVs or take crowdsourcing service and thus alleviate the congestion externality that grows with the population density. This implies that main impact of the population density on the equilibrium is the congestion externality. Figure \ref{fig_density}(c) shows that the crowdsourcing platform's profits increase, indicating that the crowdsourcing platform must expect to be launched in more densely populated areas, such as a metropolis. Figure \ref{fig_density}(d) shows that individual welfare first increases slightly and then decreases, most likely because a low population density induces insufficient number of crowdsourcing AVs, increasing the individual travel cost. The highest individual welfare is obtained when the population density is approximately $2500$ persons per $\text{km}^2$.

\subsubsection{AV market penetration rate}
We denote the autonomous vehicle market penetration rate as the ratio of the population who own AVs to the population who own AVs or MVs. A greater market penetration rate means a larger pool for AV crowdsourcing. As shown in Figures \ref{fig_penetration}(a) and \ref{fig_penetration}(b), fare and payment decline due to a more sufficient supply and the rate of decline decreases with growing AV market penetration rate. As observed from Figure \ref{fig_penetration}(c), the crowdsourcing platform's profits increase with AV popularization, demonstrating the economy of scale. Figure \ref{fig_penetration}(d) demonstrates that social welfare increase significantly, suggesting that the urban planner should have a supportive attitude toward AV popularization. Moreover, the wider gap in the social welfare between the first-best and monopoly scenarios suggests that regulation will become increasingly beneficial with increasing AV market penetration.  
 
 \begin{figure}[h]
	\centering
	\subfloat[][]{\includegraphics[width=0.5\textwidth]{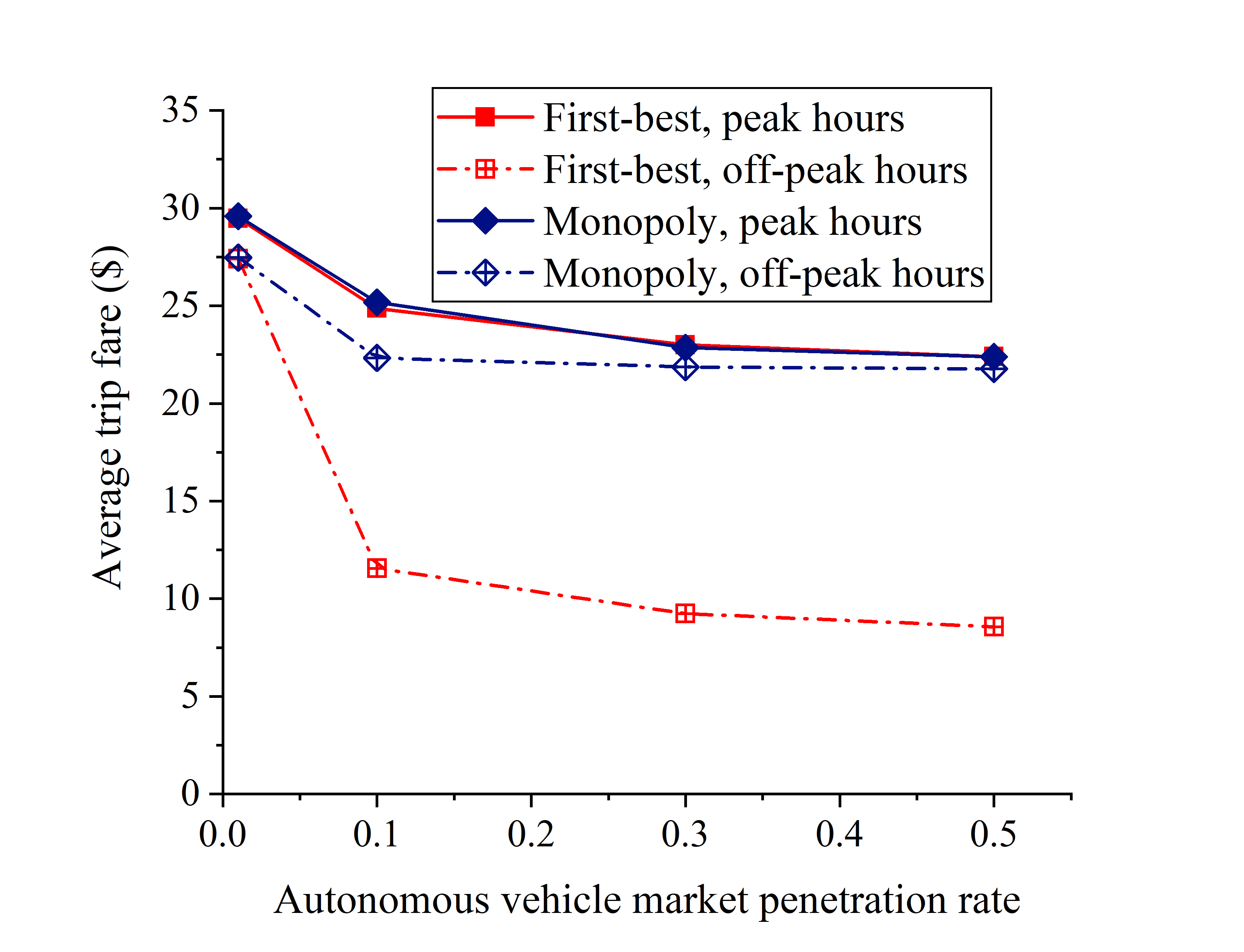}}	
	\subfloat[][]{\includegraphics[width=0.5\textwidth]{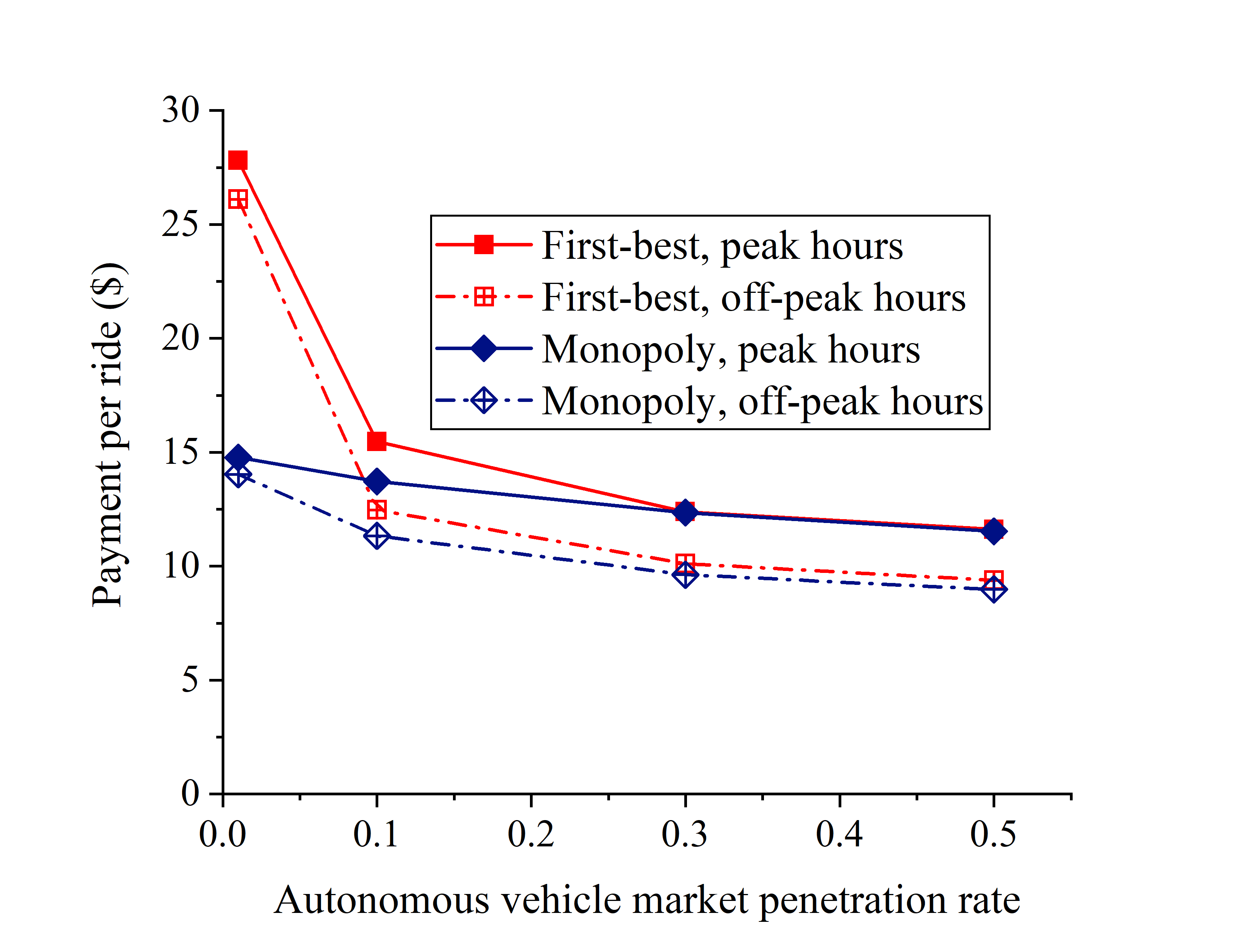}}\\	
	\subfloat[][]{\includegraphics[width=0.5\textwidth]{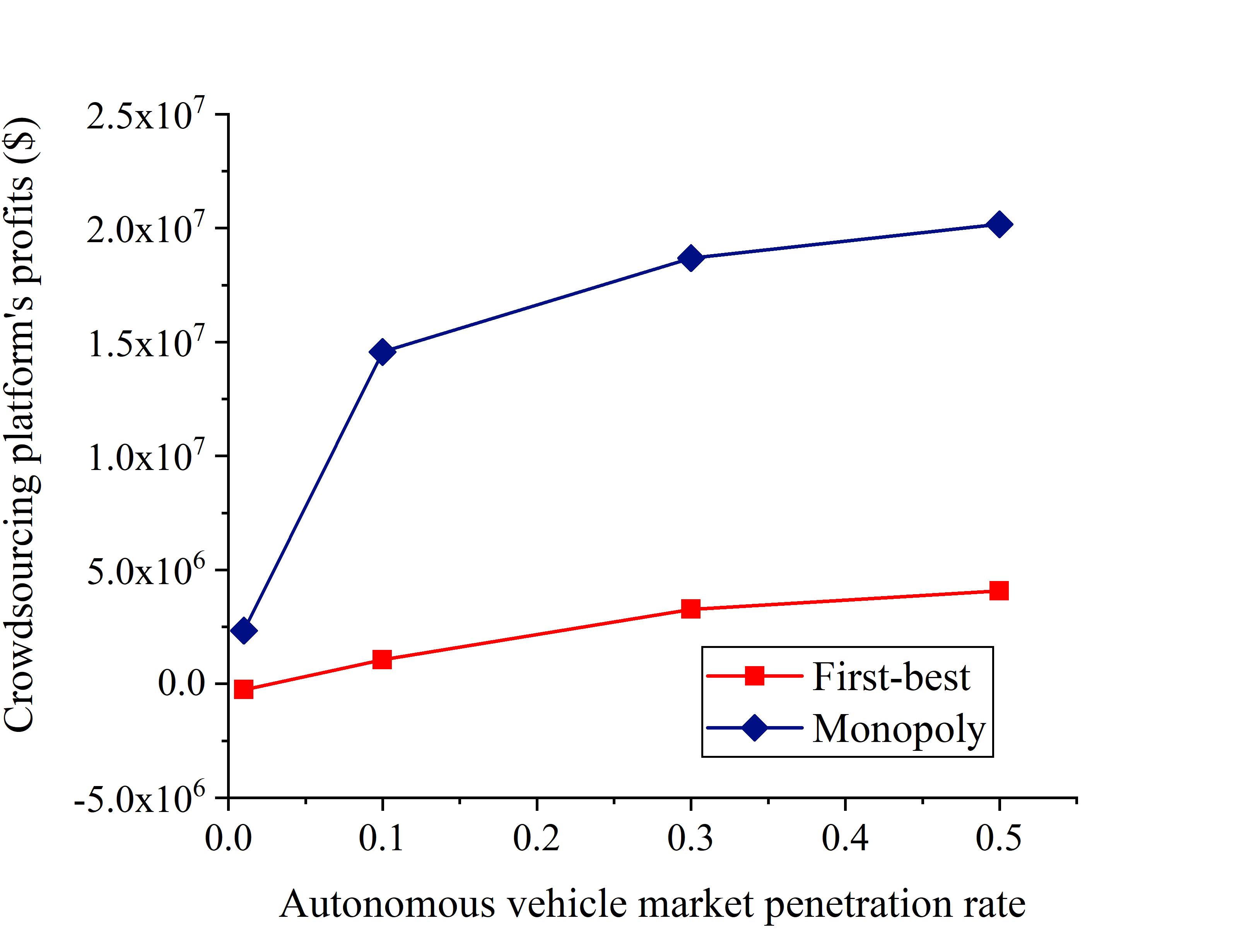}}
	\subfloat[][]{\includegraphics[width=0.5\textwidth]{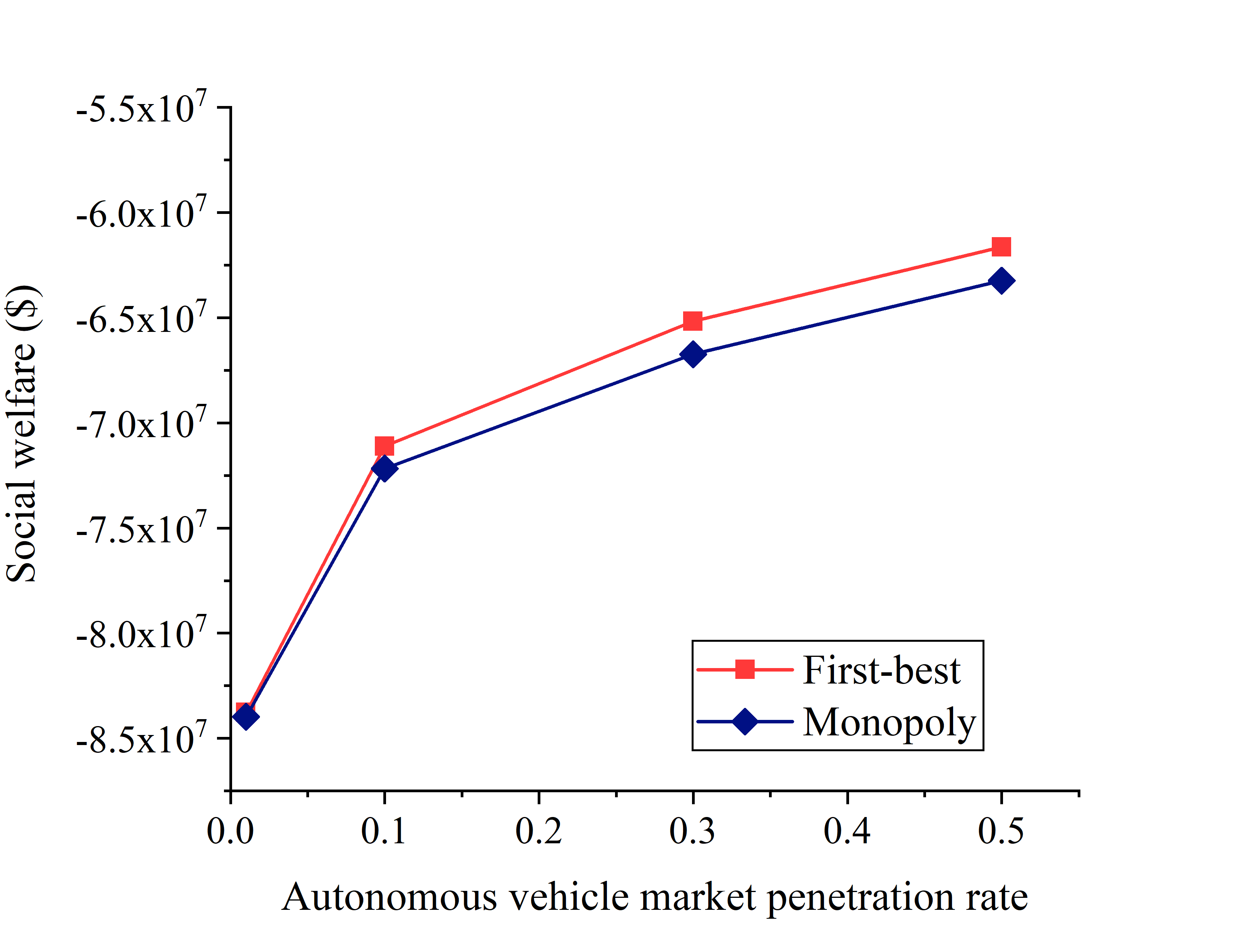}}\\
	\caption[]{Influence of the autonomous vehicle market penetration rate.} 
	\label{fig_penetration}
\end{figure}
\par

\subsubsection{AV technology maturity}
The maturity of AV technology $(1-\alpha)$ represents the degree of improvement on road capacity, where $\alpha$ represents the "relative occupation" of AVs compared to MVs. As Figures \ref{fig_maturity}(a) and \ref{fig_maturity}(b) show, AV technology maturity's impact on prices is ignorable except that the platform raises payment in peak hours for the first-best scenarios to encourage more crowdsourcing trips, which results from higher road capacity. Figure \ref{fig_maturity}(c) shows the crowdsourcing platform's profits are improved in the monopoly scenario, but are deteriorated under regulation. Figure \ref{fig_maturity}(d) points out that the advances of AV technology help improve the social welfare. However, the maturity of AV technology has limited impact on price setting and the equilibrium state.
\begin{figure}[h]
	\centering
	\subfloat[][]{\includegraphics[width=0.5\textwidth]{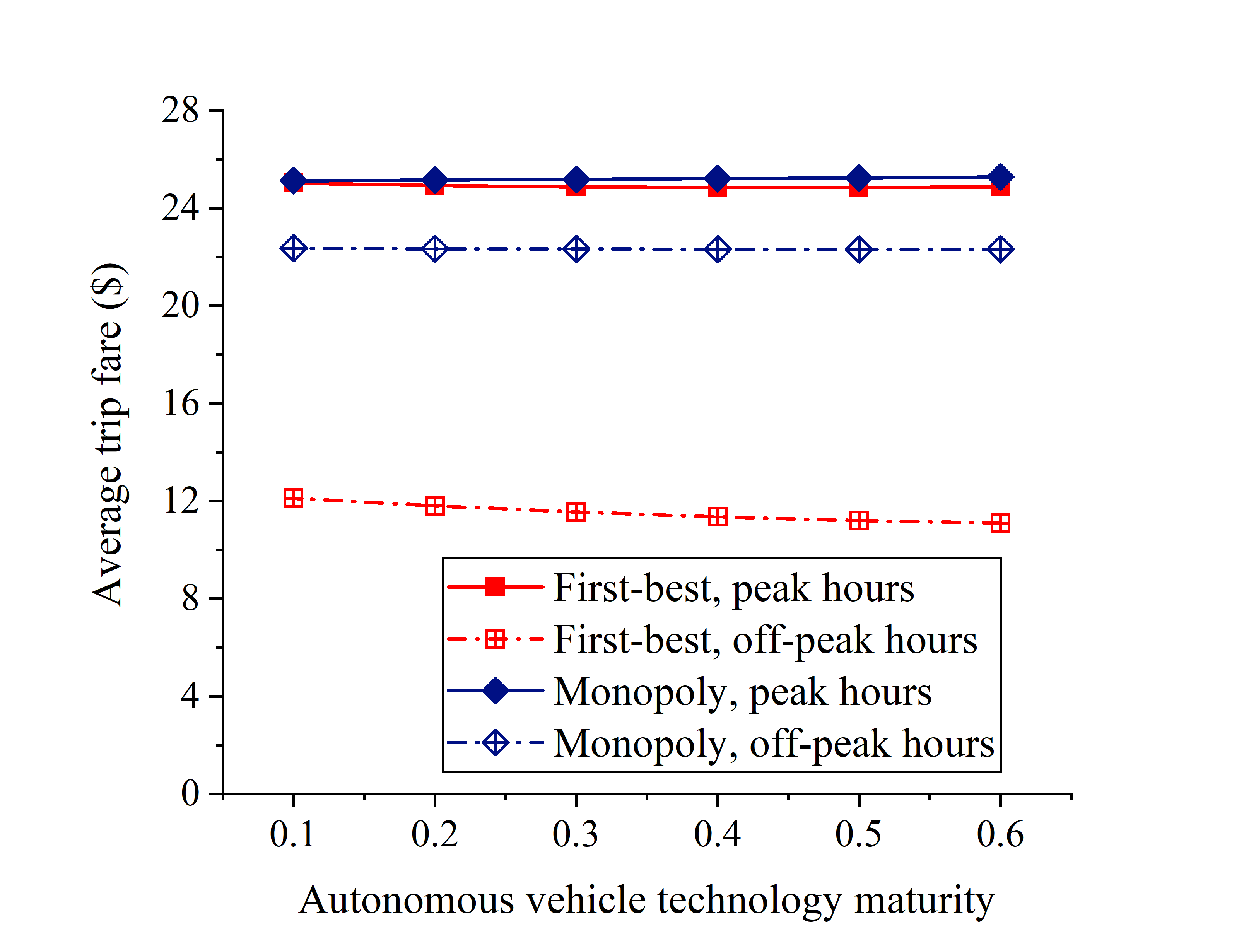}}	
	\subfloat[][]{\includegraphics[width=0.5\textwidth]{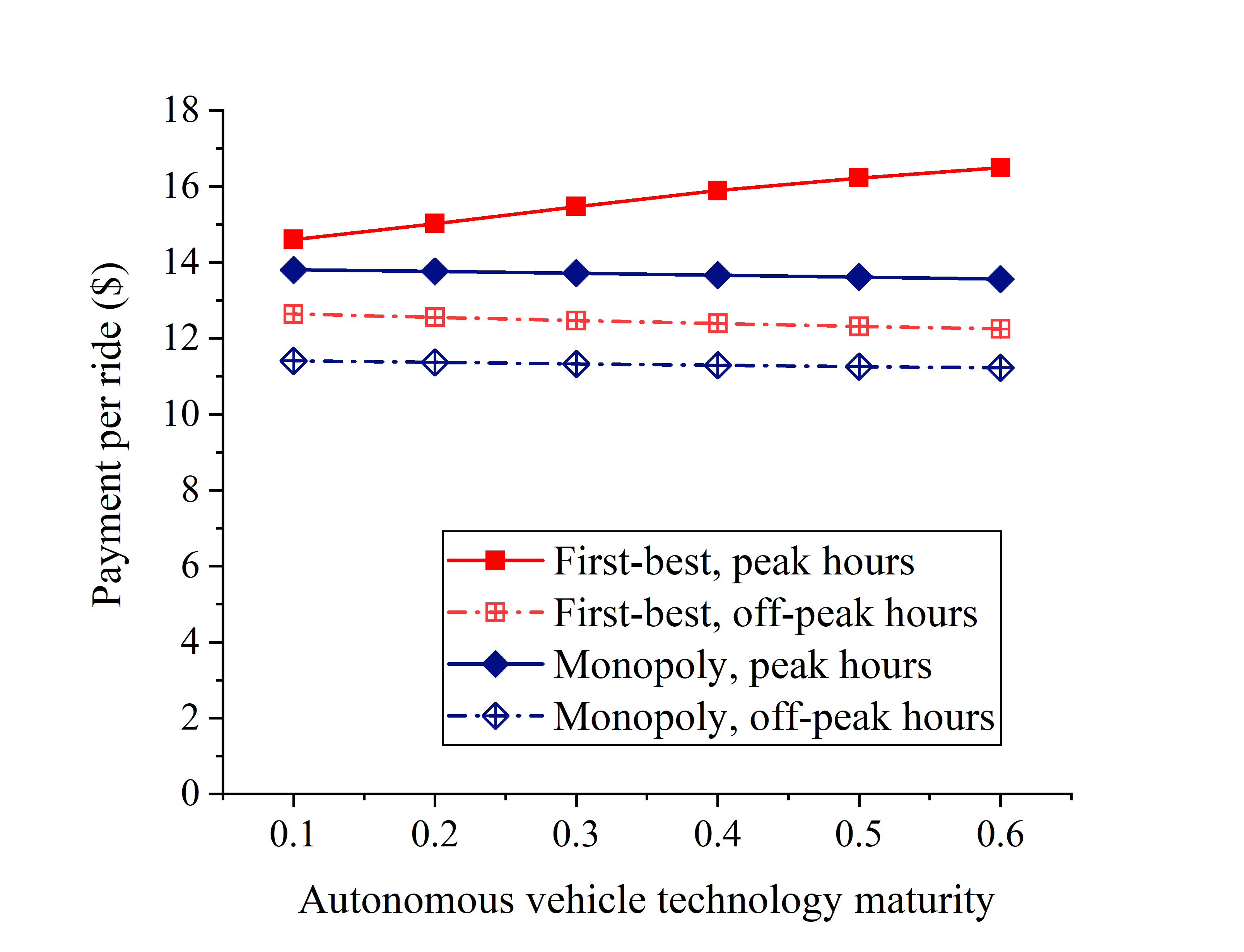}}\\	
	\subfloat[][]{\includegraphics[width=0.5\textwidth]{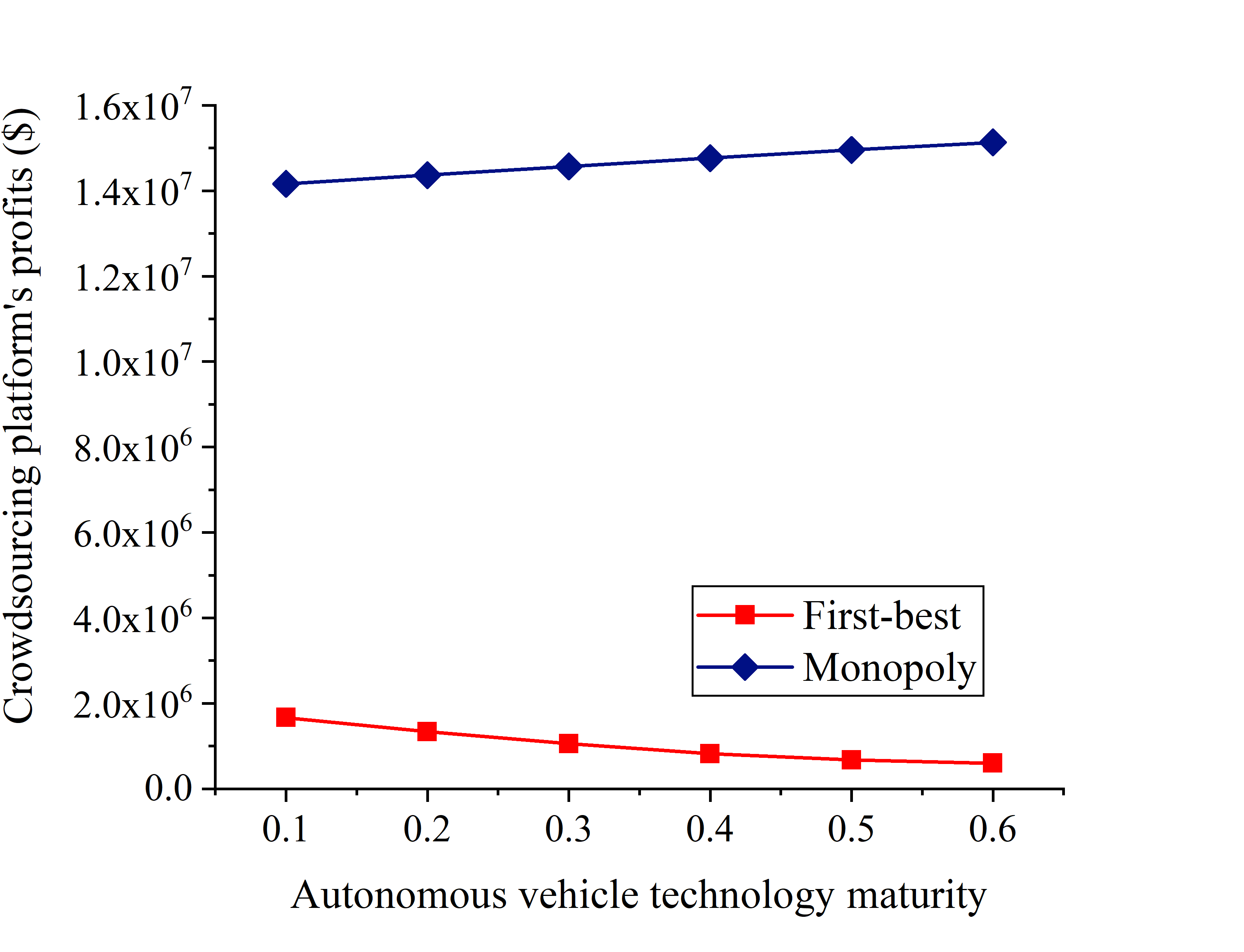}}
	\subfloat[][]{\includegraphics[width=0.5\textwidth]{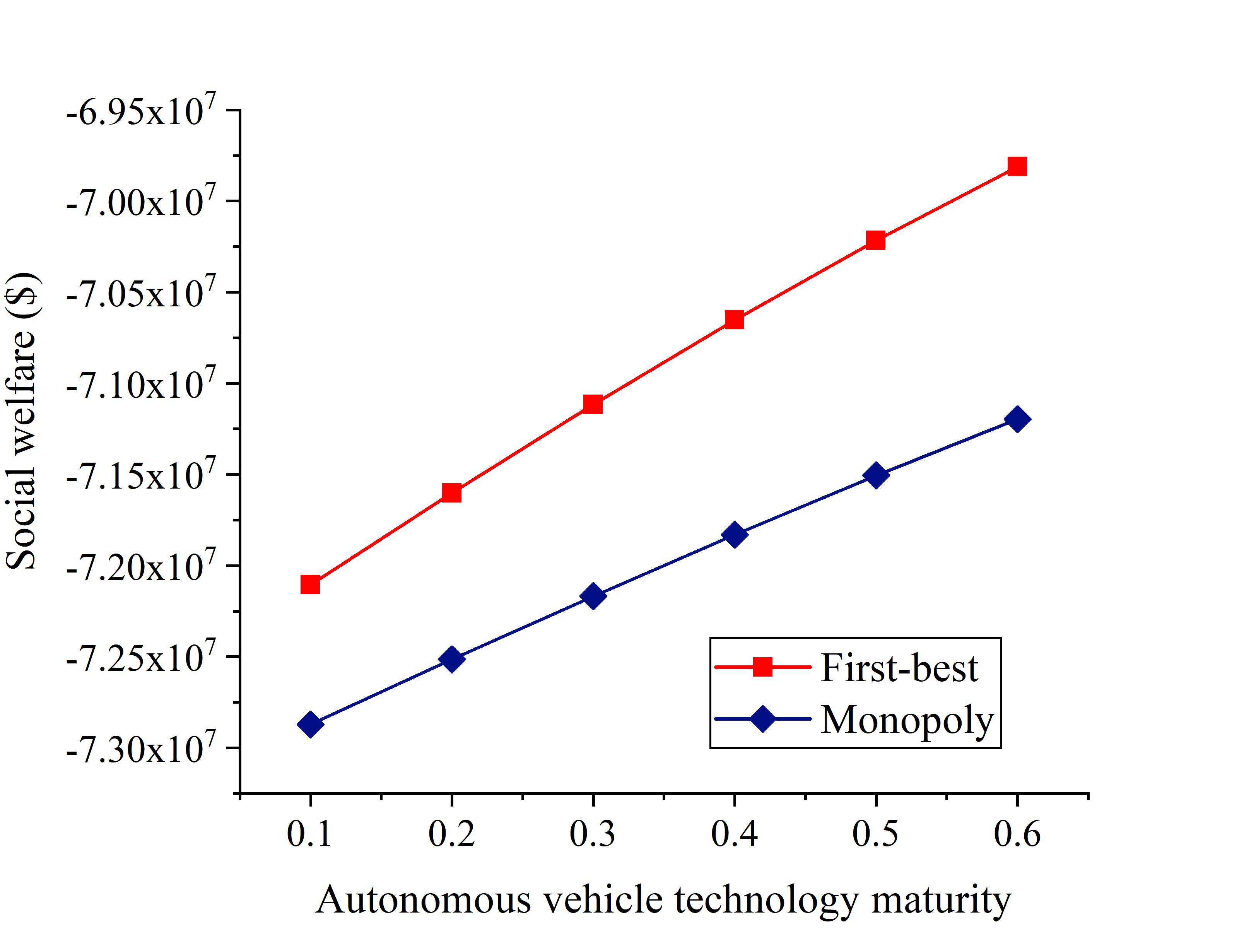}}\\
	\caption[]{Influence of autonomous vehicle technology maturity.} 
	\label{fig_maturity}
\end{figure}
\par
\subsubsection{Number of AVs pre-purchased by the crowdsourcing platform}

As shown by Figures \ref{fig_Ns}(a) and \ref{fig_Ns}(b), the optimal fare and payment remain almost unchanged. The results presented in Figure \ref{fig_Ns}(c) indicate that the crowdsourcing platform should purchase as many AVs as possible to obtain a higher daily profit. An examination of Figure \ref{fig_Ns}(d) shows that social welfare improves with greater number of purchased AVs. Nevertheless, the impact of the number of vehicles pre-purchased is much smaller than the impact of the AV market penetration rate on the equilibrium state.

\begin{figure}[h]
	\centering
	\subfloat[][]{\includegraphics[width=0.5\textwidth]{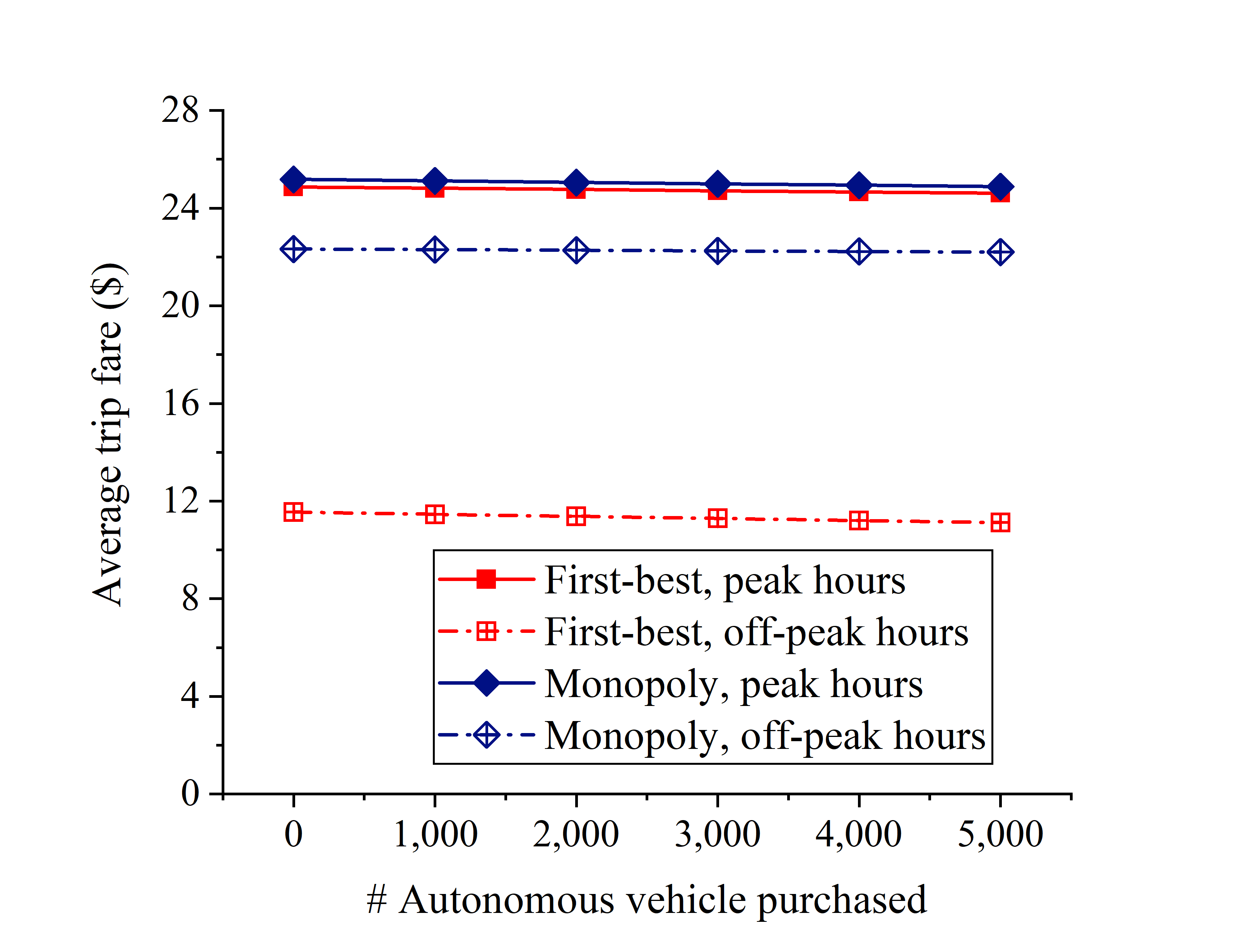}}	
	\subfloat[][]{\includegraphics[width=0.5\textwidth]{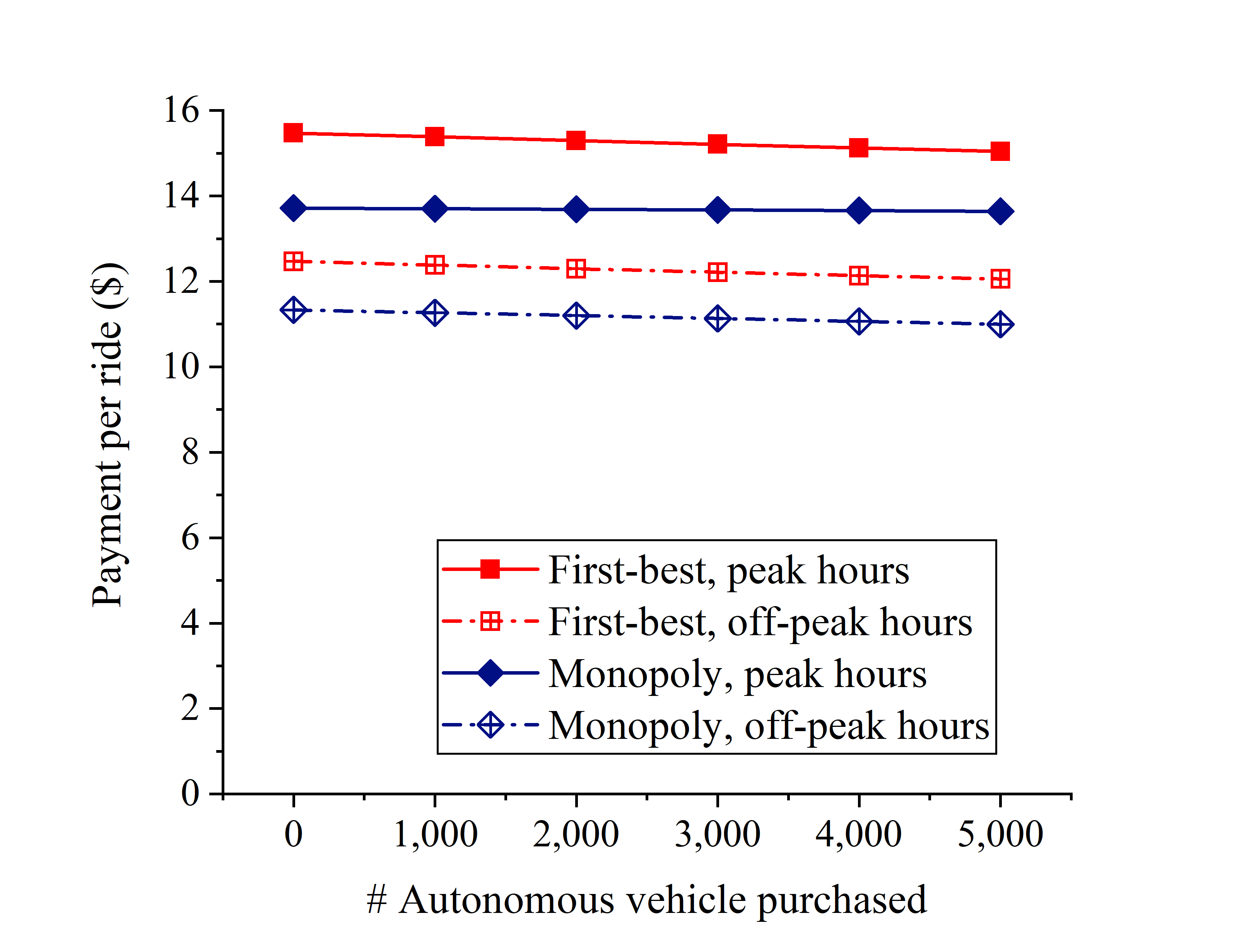}}\\	
	\subfloat[][]{\includegraphics[width=0.5\textwidth]{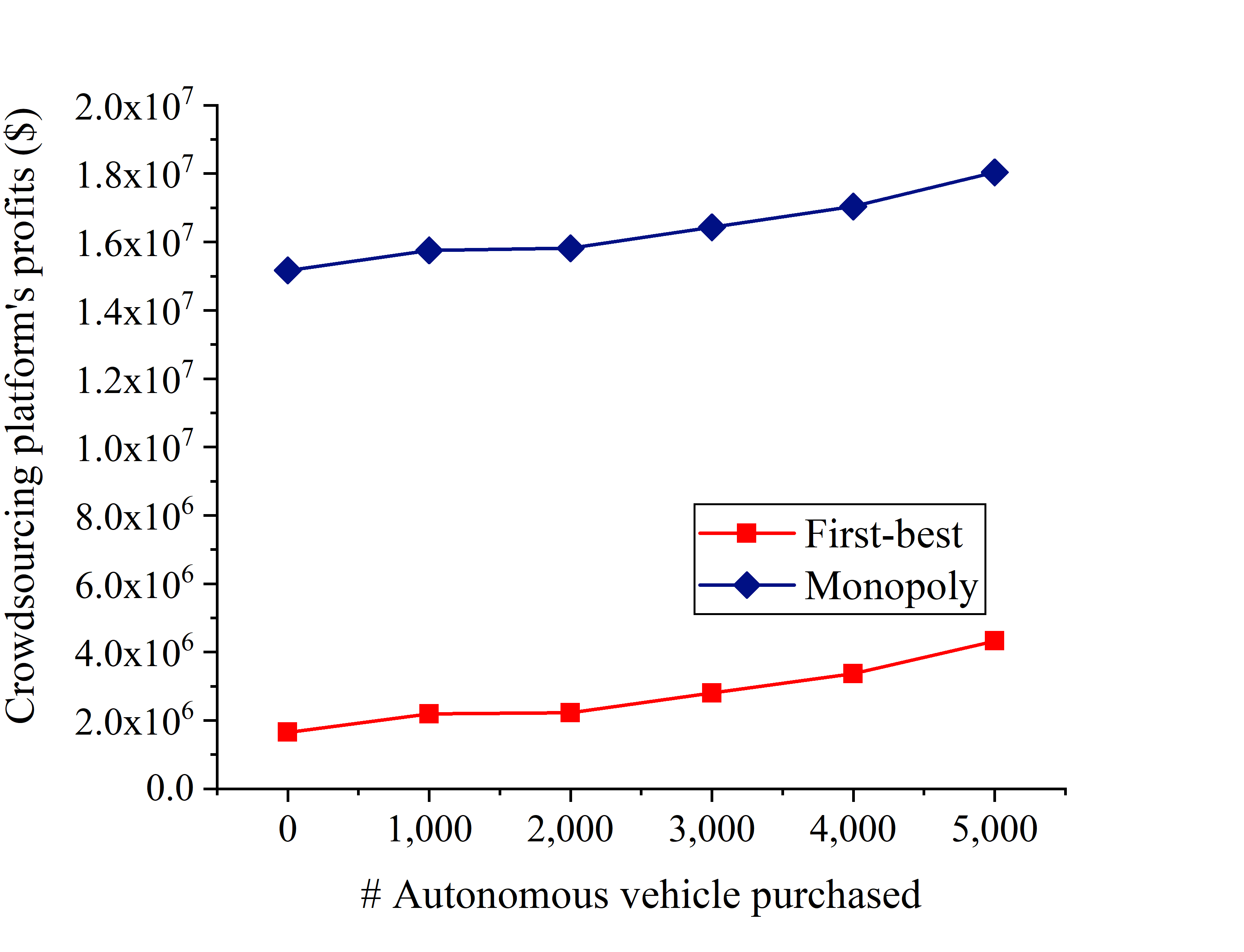}}
	\subfloat[][]{\includegraphics[width=0.5\textwidth]{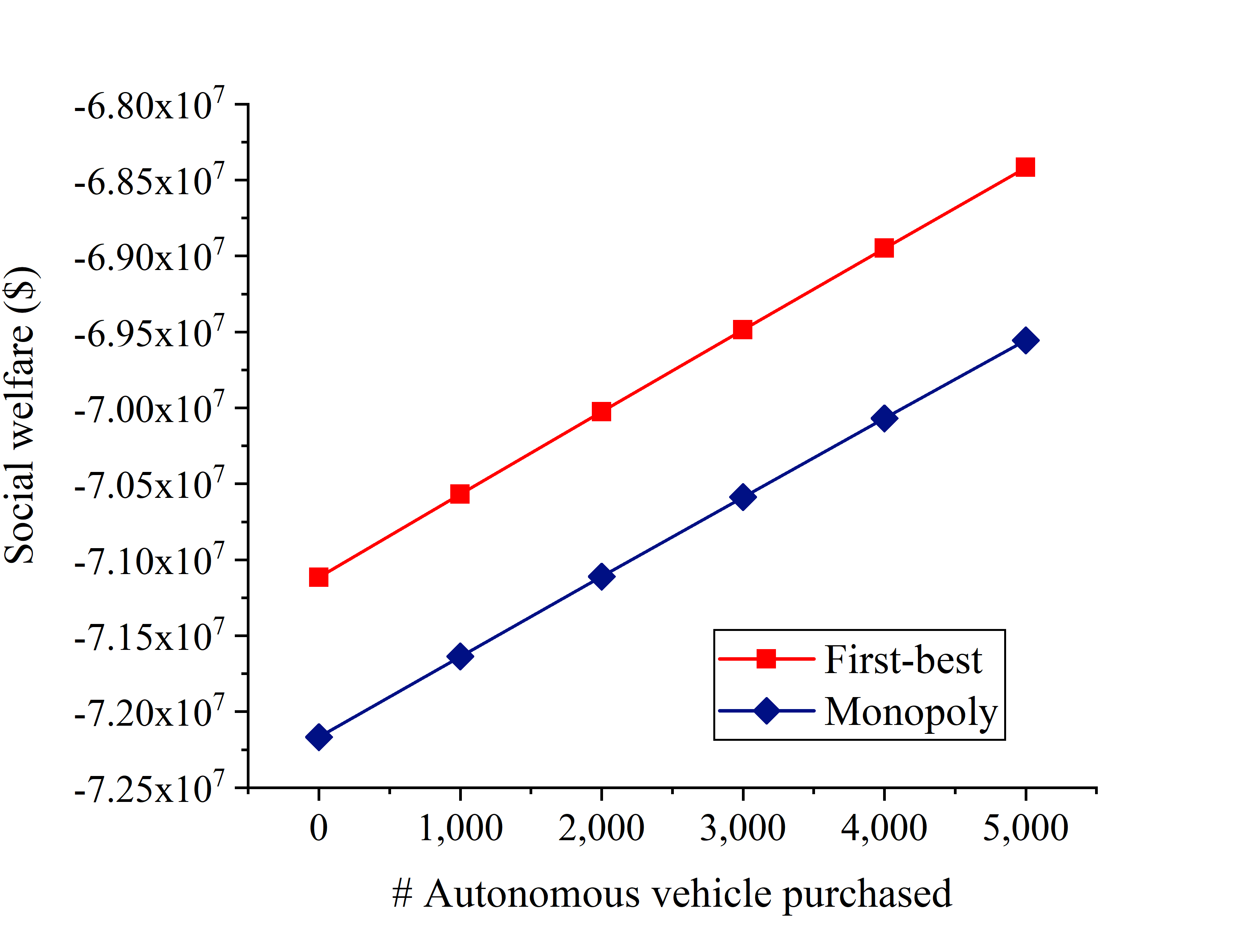}}\\
	\caption[]{Influence of the number of autonomous vehicle purchased.} 
	\label{fig_Ns}
\end{figure}
\par

\begin{figure}[]
	\centering
	\subfloat[][]{\includegraphics[width=0.5\textwidth]{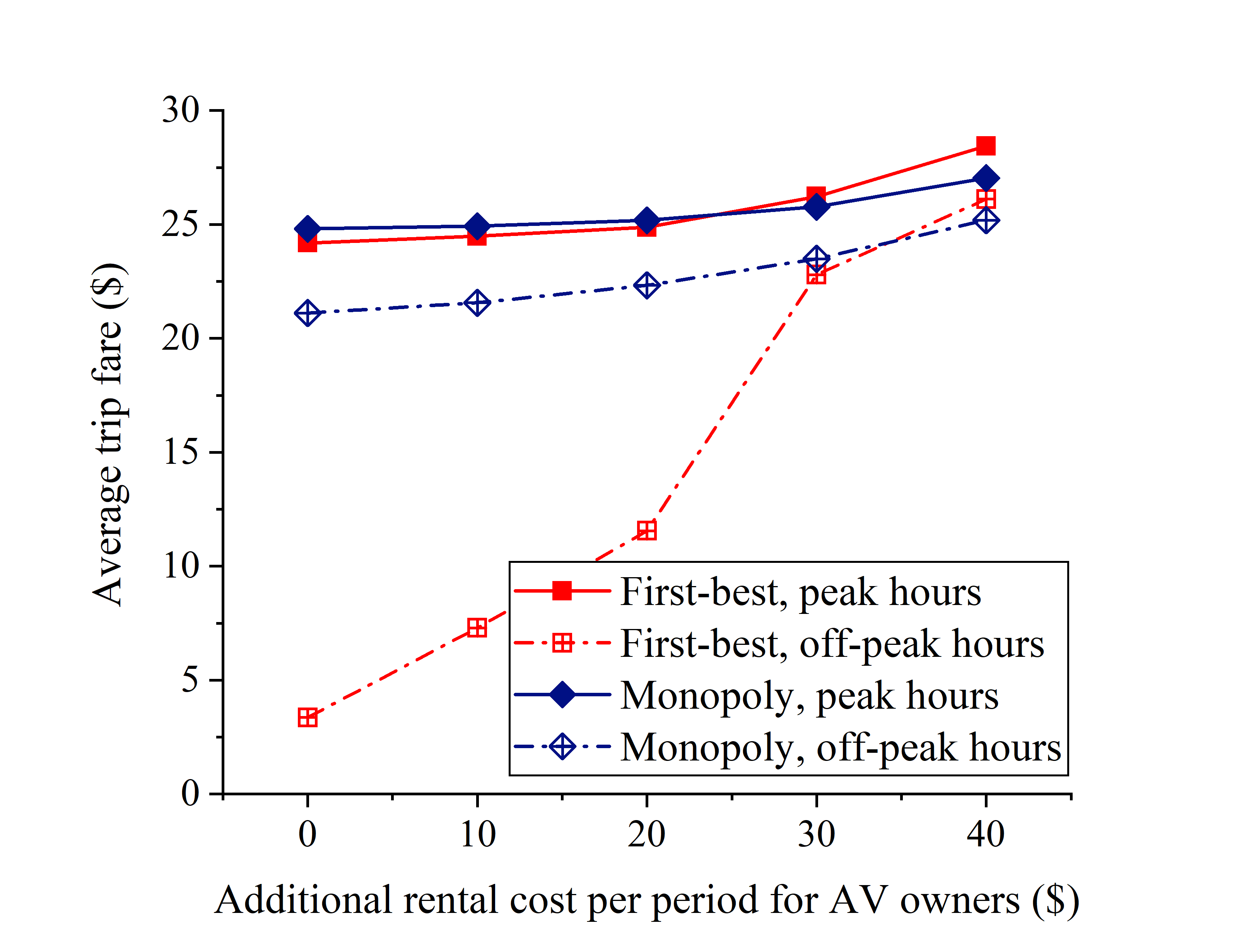}}	
	\subfloat[][]{\includegraphics[width=0.5\textwidth]{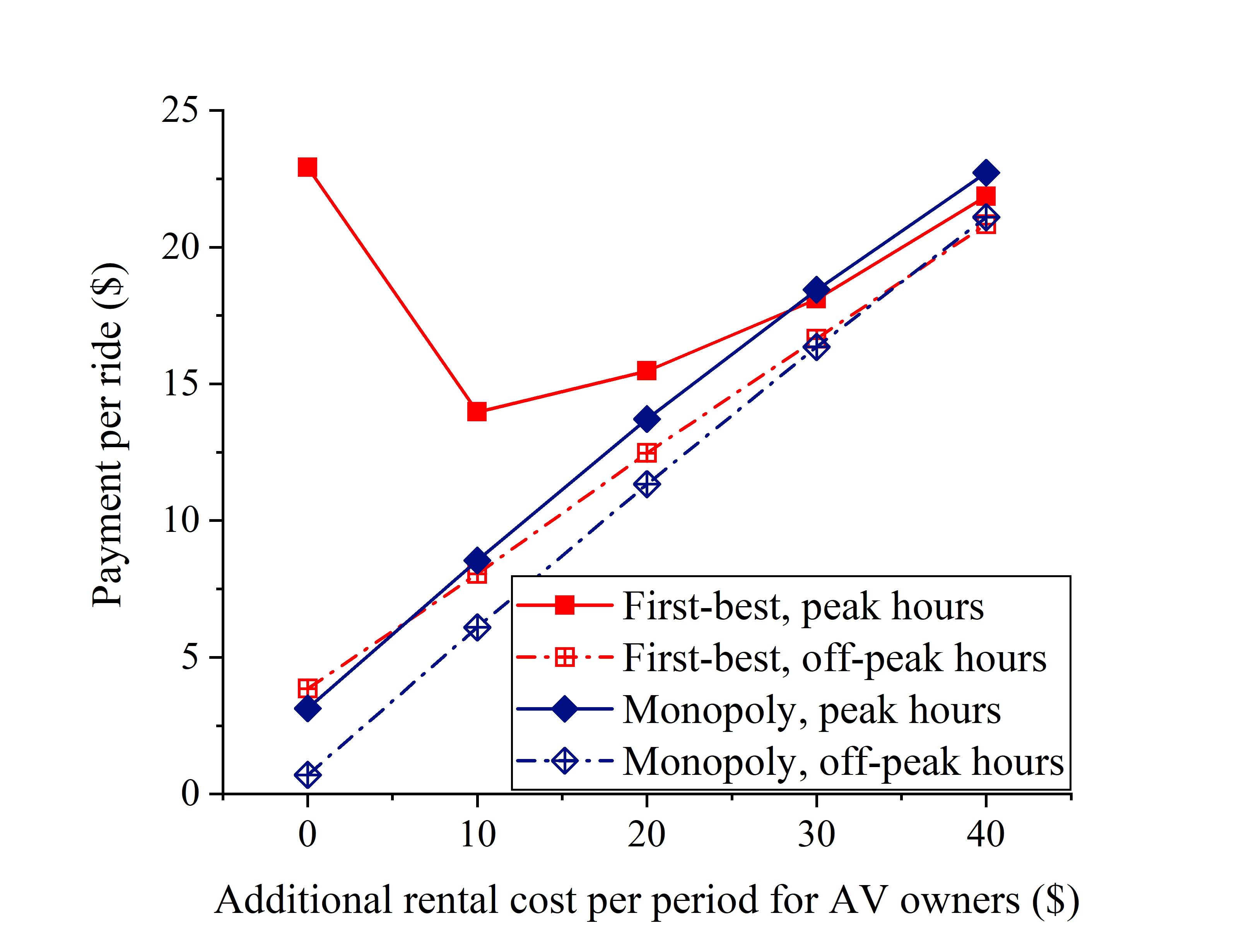}}\\	
	\subfloat[][]{\includegraphics[width=0.5\textwidth]{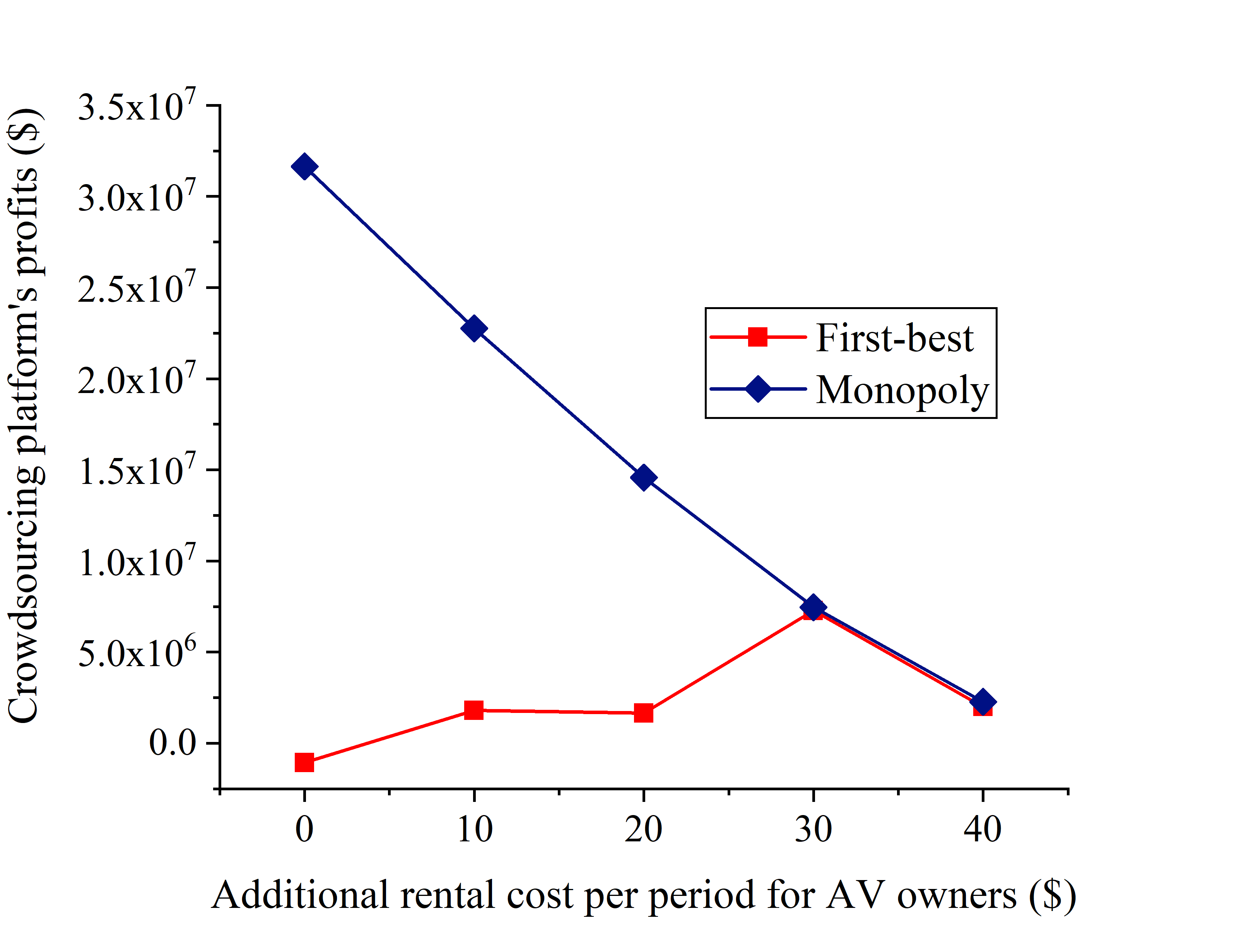}}
	\subfloat[][]{\includegraphics[width=0.5\textwidth]{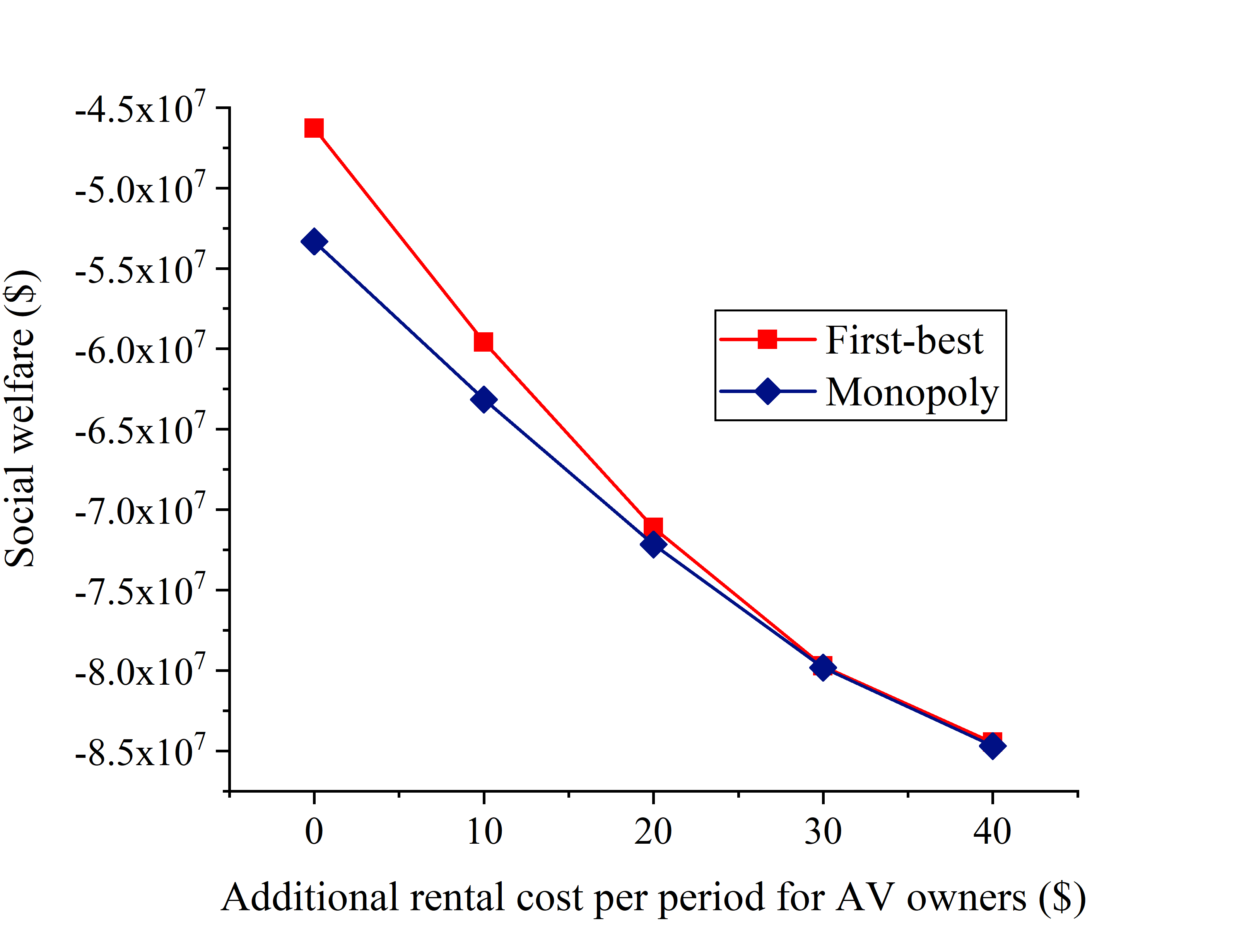}}\\
	\subfloat[][]{\includegraphics[width=0.5\textwidth]{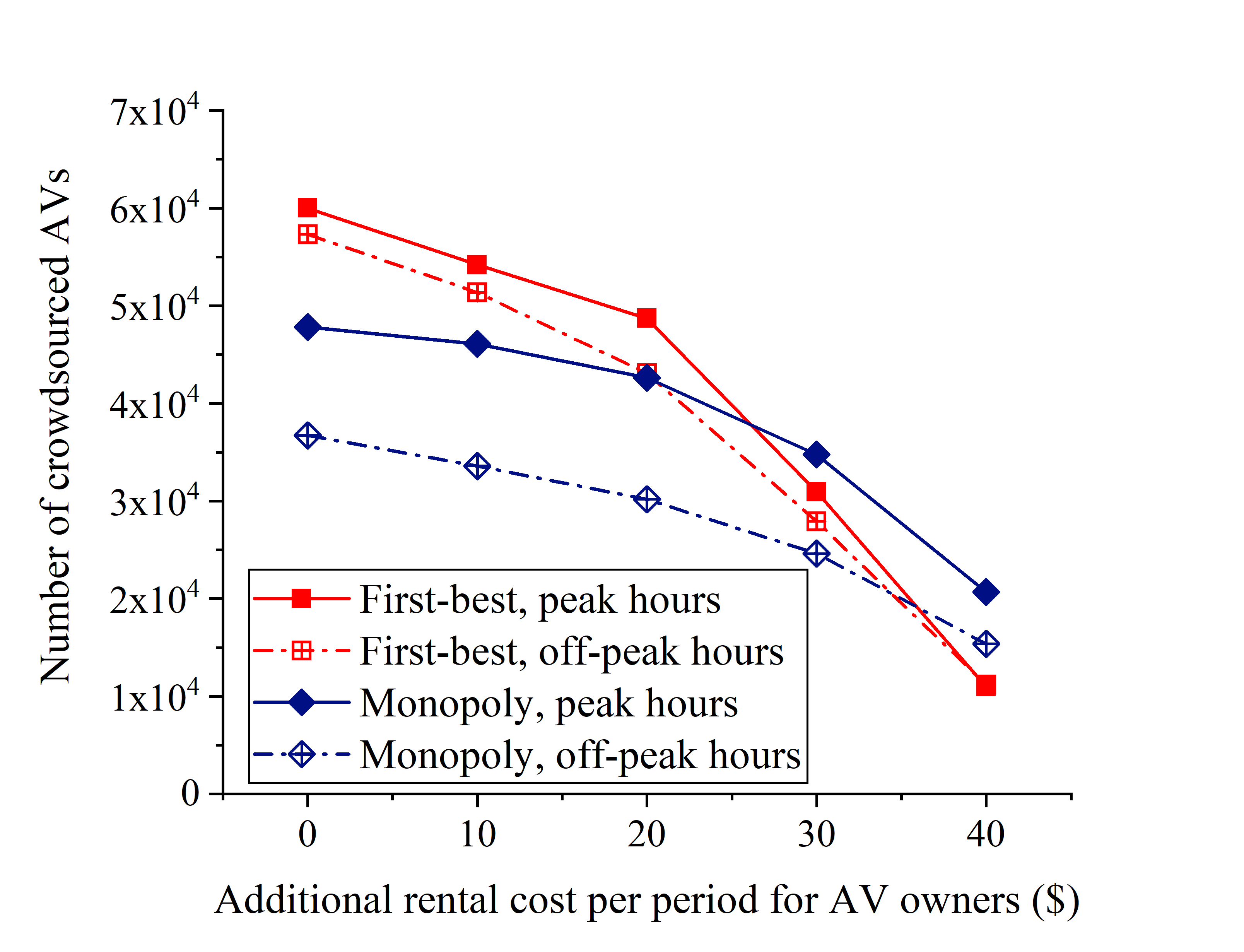}}\\
	\caption[]{Influence of the additional cost to rent out private autonomous vehicles.} 
	\label{fig_psy}
\end{figure}
\par

\subsubsection{Owners' additional costs for sharing AVs}

Figures \ref{fig_psy}(a) and \ref{fig_psy}(b) present the rise of fare and payment with increasing additional cost. The crowdsourcing platform is more difficult to collect AVs and thus must provide greater compensation to the AV owners. The decline of supply leads to a decreasing number of crowdsourcing trips. One noticeable difference is that the payment during the peak hours for the first-best scenario first decreases and then increases. In our model, travel time and additional cost are two components of social welfare. When additional cost is low, the payment in peak hours decreases to discourage crowdsourcing trips for less congestion; however, when additional cost continues to grow, additional cost becomes more dominant for social welfare and thus, the payment increases to compensate the AV owners. An examination of the results presented in Figure \ref{fig_psy}(c) shows that although the monopoly profits decrease monotonically with additional cost $m$, profits in the first-best scenario first increase and then decrease after $m$ exceeds $\$30$. The first half of the curve suggests that under the first-best case, a higher $m$ means a higher profit for the crowdsourcing platform. However, when $m$ is quite high, the first-best scenario and the monopoly show little difference in social welfare and platform's profits. These tests suggest that it is important for the crowdsourcing platform to investigate the AV owners' additional cost prior to setting prices; it may also be important for the urban planner to determine $m$ when deciding whether to impose regulations because a small $m$ means that regulation has a more beneficial effect on the society, as shown in Figure \ref{fig_psy}(d).

\section{Conclusions} \label{sec6}

This paper investigates the AV crowdsourcing market that may emerge in the future. We first proposed an equilibrium model with multiple transport modes: AV on-demand ride service, private AV, private MV and public transit, and proved the existence of the equilibrium solution. Then, several scenarios, namely, the monopoly scenario, the first-best case and the trade-off second-best case were explored.

We use numerical examples to illustrate some insightful findings concerning the AV crowdsourcing market, including the comparison of optimal pricing for trip fare and payment during the peak and off-peak hours, the crowdsourcing platform's profits, the social welfare and transport modes in each scenario. Sensitivity analysis of the optimal pricing and the  equilibrium state with regard to different parameters in the model are presented in a thorough manner. The main findings from the numerical examples are summarized below.\par
\noindent \ 1. AV crowdsourcing service is beneficial for the society overall. This provides guidance for urban planners to have a supportive attitude toward AV development and popularization.\par
\noindent \ 2. Renting out private AVs to the crowdsourcing platform provides high return, encouraging people to rent their cars instead of driving them themselves. This verifies a new opportunity for adventurous investors such as the wealthy to invest into AVs.  \par
\noindent \ 3. Regulation may be easier for the crowdsourcing platform to accept in some second-best scenario where the social welfare is close to that of the first-best scenario. Regulation is particularly necessary when people are not sensitive to utilities, or the AV owners are more inclined to rent out their AVs to gain profits.  \par
\noindent \ 4. It is important for the crowdsourcing platform to know about the people's sensitivity to utilities and AV owners' additional cost prior to setting prices, because these two factors significantly affect the optimal pricing. Additionally, the crowdsourcing platform should expect to be launched in areas with high population density and should purchase as many AVs in advance as possible to obtain higher profits. Nevertheless, even without pre-purchased vehicles, the platform is highly likely to gain profits under the premise that the market penetration rate of AVs is sufficiently high (above 10\%). \par

This paper could be further advanced by considering competition between MV and AV on-demand ride service in the mobility market, as is closer to the reality before the society is fully accustomed to AV technologies. Additionally, as shared autonomous vehicles will probably be operated together with public transit, the joint planning and operation of two modes to form an integrated mobility-as-a-service (MaaS) system should be explored in the future. In terms of wider implications, our results suggest that with the presence of AV crowdsourcing business, even when most AVs are owned privately, these AVs will be mostly used for serving other customers in the society, which motivates the discussion of AV ownership in the future, including where it will converge and how it will evolve. On the other hand, for these AV owners, AVs will not be purchased merely for personal usage, but also for investment (buying vehicles to earn money from day to day), and the capital value of AVs merits careful investigation.

\section*{Acknowledgements}
The research is supported in part by Tsinghua-Daimler Joint Research Center for Sustainable Transportation.

\newpage
\section*{References}
\noindent

Arnott, R. (1996). Taxi travel should be subsidized. Journal of Urban Economics, 40(3):316–333. \par
Bang, S. and Ahn, S. (2018). Control of connected and autonomous vehicles with cut-in movement using spring mass damper system. Transportation Research Record: Journal of the Transportation Research Board, 2672:036119811879692.\par
Bayus, B. (2010). Crowdsourcing and individual creativity over time: The detrimental effects of past success. SSRN Electronic Journal.\par
Bayus, B. L. (2013). Crowdsourcing new product ideas over time: An analysis of the dell ideastorm community. Manage. Sci., 59(1):226–244.\par
Ben-Akiva, M. E. and Lerman, S. R. (1985). Discrete Choice Analysis: Theory and Application to Travel Demand.\par
Bian, Y., Zheng, Y., Ren, W., Li, S. E., Wang, J., and Li, K. (2019). Reducing time headway for platooning of connected vehicles via v2v communication. Transportation Research Part C: Emerging Technologies, 102:87 – 105.\par
Brabham, D. C. (2008). Crowdsourcing as a model for problem solving: An introduction and cases. Convergence, 14(1):75–90.\par
Chen, X.-D., Li, M., Lin, X., Yin, Y., and He, F. (2020). Rhythmic control of automated trafﬁc – part i: Concept and properties at isolated intersections. arXiv: Optimization and Control.\par
Daganzo, C. F. and Ouyang, Y. (2019). A general model of demand-responsive transportation services: From taxi to ridesharing to dial-a-ride. Transportation Research Part B: Methodological, 126:213–224.\par
Estelle´s-Arolas, E. and Gonza´lez-Ladro´n-De-Guevara, F. (2012). Towards an integrated crowdsourcing deﬁnition. J. Inf. Sci., 38(2):189–200.\par
Fagnant, D. J. and Kockelman, K. (2015). Preparing a nation for autonomous vehicles: opportunities, barriers and policy recommendations. Transportation Research Part A: Policy and Practice, 77:167 – 181.\par
Glaeser, E. L., Hillis, A., Kominers, S. D., and Luca, M. (2016). Crowdsourcing city government: Using tournaments to improve inspection accuracy. American Economic Review, 106(5):114–18.\par
Gong, S. and Du, L. (2018). Cooperative platoon control for a mixed trafﬁc ﬂow including human drive vehicles and connected and autonomous vehicles. Transportation Research Part B: Methodological, 116:25 – 61.\par
Gray, J. (2015). Implicit Functions, pages 259–269.\par
He, F., Wang, X., and Tang, X. (2018). Pricing and penalty/compensation strategies of a taxi-hailing platform. Transportation Research Part C Emerging Technologies, 86:263–279.\par
Kang, J., Kuznetsova, P., Luca, M., and Yejin, C. (2013). Where not to eat? improving public policy by predicting hygiene inspections using online reviews.\par
Lee, J. and Park, B. (2012). Development and evaluation of a cooperative vehicle  intersection control algorithm under the connected vehicles environment. IEEE Transactions on Intelligent Transportation Systems, 13(1):81–90.\par
Li, Z., Elefteriadou, L., and Ranka, S. (2014).  Signal control optimization for automated vehicles at isolated signalized intersections. Transportation Research Part C: Emerging Technologies, 49:1 – 18.\par
Lin, X., Li, M., Shen, Z., Yin,  Y.,  and He, F.  (2021). Rhythmic control of automated trafﬁc – part  ii: Grid network rhythm and online routing. Transportation Science, forthcoming.\par
Litman, T. (2015). Autonomous vehicle implementation predictions: Implications for transport planning.\par
Luo, Y., Xiang, Y., Cao, K., and Li, K. (2016). A dynamic automated lane change maneuver based on vehicle-to-vehicle communication. Transportation Research Part C: Emerging Technologies, 62:87– 102.\par
Michael, J. B., Godbole, D. N., Lygeros, J., and Sengupta, R. (1998). Capacity analysis of trafﬁc ﬂow over a single-lane automated highway system. Journal of Intelligent Transportation System, 4(1-2):49–80.\par
Narayanan, S., Chaniotakis, E., and Antoniou, C. (2020). Shared autonomous vehicle services: A comprehensive review. Transportation Research Part C: Emerging Technologies, 111:255 – 293.\par
Poetz, M. K. and Schreier, M. (2012). The value of crowdsourcing: Can users really compete with professionals in generating new product ideas? Journal of Product Innovation Management, 29(2):245–256.\par
Schroeter, J. R. (1983). A model of taxi service under fare structure and ﬂeet size regulation. The Bell Journal of Economics, pages 81–96.\par
Stocker, A. and Shaheen, S. (2018). Shared automated mobility: Early exploration and potential impacts. In Meyer, G. and Beiker, S., editors, Road Vehicle Automation 4, pages 125–139, Cham. Springer International Publishing.\par
Sun, J., Zheng, Z., and Sun, J. (2020). The relationship between car following string instability and trafﬁc oscillations in ﬁnite-sized platoons and its use in easing congestion via connected and automated vehicles with idm based controller. Transportation Research Part B: Methodological, 142:58 – 83.\par
Tang, X., Li, M., Lin, X., and He, F. (2020). Online operations of automated electric taxi ﬂeets: An advisor-student reinforcement learning framework. Transportation Research Part C: Emerging Technologies, 121:102844.\par
Tientrakool, P., Ho, Y.-C., and Maxemchuk, N. F. (2011). Highway capacity beneﬁts from using vehicle-to-vehicle communication and sensors for collision avoidance. In 2011 IEEE Vehicular Technology Conference (VTC Fall), pages 1–5. IEEE.\par
Vander Schee, B. (2009). Crowdsourcing: Why the power of the crowd is driving the future of business [book review] 2009 jeff howe. new york, ny: Crown business. Journal of Consumer Marketing, 26:305–306.\par
Vignon, D. and Yin, Y. (2020). Regulating ride-sourcing services with product differentiation and congestion externality. Available at SSRN 3531372.\par
Vosooghi, R., Puchinger, J., Jankovic, M., and Vouillon, A. (2019). Shared autonomous vehicle simulation and service design. Transportation Research Part C: Emerging Technologies, 107:15 – 33.\par
Wang, C., Gong, S., Zhou, A., Li, T., and Peeta, S. (2019). Cooperative adaptive cruise control for connected autonomous vehicles by factoring communication-related constraints. Transportation Research Procedia, 38:242 – 262. Journal of Transportation and Trafﬁc Theory.\par
Wang, H. and Yang, H. (2019). Ridesourcing systems: A framework and review. Transportation Research Part B: Methodological, 129:122 – 155.\par
Wang, M., Daamen, W., Hoogendoorn, S. P., and van Arem, B. (2014). Rolling horizon control framework for driver assistance systems. part ii: Cooperative sensing and cooperative control. Transportation Research Part C: Emerging Technologies, 40:290 – 311.\par
Wu, J., Ah, S., Zhou, Y., Liu, P., and Qu, X. (2020). The cooperative sorting strategy for connected and automated vehicle platoons. arXiv preprint arXiv:2003.06481.\par
Xu, B., Li, S. E., Bian, Y., Li, S., Ban, X. J., Wang, J., and Li, K. (2018). Distributed conﬂict-free cooperation for multiple connected vehicles at unsignalized intersections. Transportation Research Part C: Emerging Technologies, 93:322 – 334.\par
Yang, D., Zheng, S., Wen, C., Jin, P. J., and Ran, B. (2018). A dynamic lane-changing trajectory planning model for automated vehicles. Transportation Research Part C: Emerging Technologies, 95:228 – 247.\par
Yang, H. and Wong, S. (1998). A network model of urban taxi services. Transportation Research  Part B: Methodological, 32(4):235 – 246.\par
Yang, T., Yang, H., Wong, S. C., and Sze, N. N. (2014). Returns to scale in the production of taxi services: an empirical analysis. Transportmetrica A: Transport Science, 10(9):775–790.\par
Yu, C., Feng, Y., Liu, H. X., Ma, W., and Yang, X. (2018). Integrated optimization of trafﬁc signals and vehicle trajectories at isolated urban intersections. Transportation Research Part B: Methodological, 112:89 – 112.\par
Yu, J. J., Tang, C. S., Shen, Z.-J. M., and Chen, X. (2017). Should on-demand ride services be regulated? an analysis of chinese government policies. An Analysis of Chinese Government Policies (June 16, 2017).\par
Zervas, G., Proserpio, D., and Byers, J. W. (2017). The rise of the sharing economy: Estimating the impact of airbnb on the hotel industry. Journal of Marketing Research, 54(5):687–705.\par
Zha, L., Yin, Y., and Yang, H. (2016). Economic analysis of ride-sourcing markets. Transportation Research Part C: Emerging Technologies, 71:249–266.\par
Zhao, Y. and Zhu, Q. (2014). Evaluation on crowdsourcing research: Current status and future direction. Information Systems Frontiers, 16(3):417–434.\par
Zmud, J., Williams, T., Outwater, M., Bradley, M., Kalra, N., and Row, S. (2018). Updating regional transportation planning and modeling tools to address impacts of connected and automated vehicles, volume 1: Executive summary. Technical report.\par
Zou, H. (2008). Chapter 7 - ﬁxed point theory and elliptic boundary value problems. In Chipot, M., editor, Handbook of Differential Equations, volume 6 of Handbook of Differential Equations: Stationary Partial Differential Equations, pages 503 – 583. North-Holland.\par

\newpage
\section*{Appendix A. Nomenclature}
\begin{figure}[]
	\centering
	\includegraphics[width=1\textwidth]{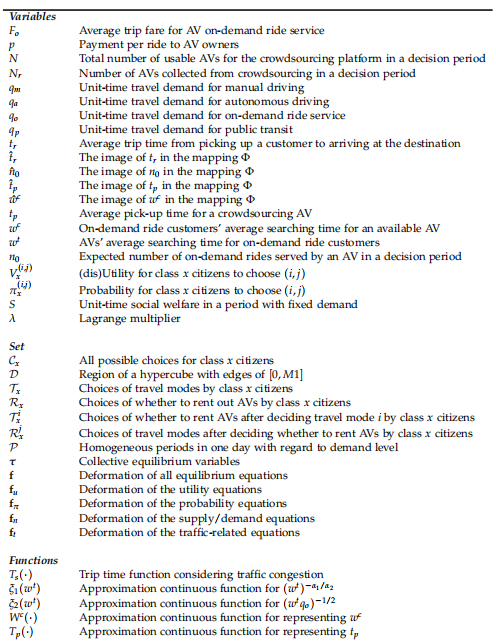}
\end{figure}
\newpage
\begin{figure}[]
	\centering
	\includegraphics[width=1\textwidth]{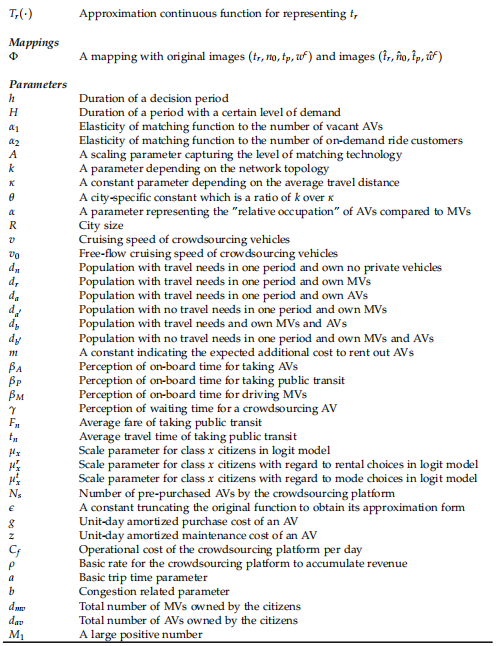}
\end{figure}

\newpage
\section*{Appendix B. Discussion about logit model in Section \ref{sec3_2}}
\renewcommand{\theequation}{B-\arabic{equation}}
\setcounter{equation}{0}

Let $\mu_x^r$ denote the scale parameter of logit model with regard to rental choices; $\mu_x^t$ denote the scale parameter of logit model with regard to transport mode choices.
If $\mu_{x}^{r}>\mu_{x}^{t}$, citizens first choose travel modes and then choose whether to rent their AVs to the crowdsourcing platform. Let 
$\mathcal{T}_{x}$ denote the choice set of travel modes by class $x$ citizens and $\mathcal{T}_{x}^{i}$ denote the rental choice set after they have chosen travel mode $i$. Therefore, the probability of class $x$ citizens choosing $(i,j)\in \mathcal{C}_x$ can be obtained using the nested logit model:

\begin{align}
& \pi_x^{(i,j)}=\frac{e^{\mu_{x}^{r} V_x^{(i,j)}}}{\sum\nolimits_{j \in \mathcal{T}_{x}^{i}} e^{\mu_{x}^{r} V_x^{(i,j)}}}  \frac{{(\sum\nolimits_{j \in \mathcal{T}_{x}^{i}}^{} e^{\mu_{x}^{r} V_x^{(i,j)}})}^{\mu_{x}^{t}/\mu_{x}^{r}}}{\sum\nolimits_{i \in \mathcal{T}_x}^{} {(\sum\nolimits_{j \in \mathcal{T}_{x}^{i}}^{} e^{\mu_{x}^{r} V_x^{(i,j)}})}^{\mu_{x}^{t}/\mu_{x}^{r}}}
& \forall (i,j) \in \mathcal{C}_x, x \in \{1,2,\dots,6\}\label{eqB1}
\end{align}
\noindent

If $\mu_{x}^{r}<\mu_{x}^{t}$, citizens first choose whether to rent their AVs to the crowdsourcing platform and then choose travel modes. Let 
$\mathcal{R}_{x}$ denote the rental choice set by class $x$ citizens and $\mathcal{R}_{x}^{j}$ denote the choice set of travel modes after they have made rental choice $j$. Therefore, the probability of class $x$ citizens choosing $(i,j)\in \mathcal{C}_x$ can be obtained using the nested logit model:

\begin{align}
& \pi_x^{(i,j)}=\frac{e^{\mu_{x}^{t} V_x^{(i,j)}}}{\sum\nolimits_{i \in \mathcal{R}_{x}^{j}} e^{\mu_{x}^{t} V_x^{(i,j)}}}  \frac{{(\sum\nolimits_{i \in \mathcal{R}_{x}^{j}}^{} e^{\mu_{x}^{t} V_x^{(i,j)}})}^{\mu_{x}^{r}/\mu_{x}^{t}}}{\sum\nolimits_{j \in \mathcal{R}_x}^{} {(\sum\nolimits_{i \in \mathcal{R}_{x}^{j}}^{} e^{\mu_{x}^{t} V_x^{(i,j)}})}^{\mu_{x}^{r}/\mu_{x}^{t}}}
& \forall (i,j) \in \mathcal{C}_x, x \in \{1,2,\dots,6\}\label{eqB2}
\end{align}

If $\mu_{x}^{r}=\mu_{x}^{t}=\mu_x$, people decide simultaneously whether to rent their AVs and the  type of travel mode they take. Therefore, the probability of class $x$ citizens choosing $(i,j)\in \mathcal{C}_x$ is reduced to the logit model:

\begin{align}
& \pi_x^{(i,j)}=\frac{e^{\mu_x V_x^{(i,j)}}}{\sum\nolimits_{(i,j) \in \mathcal{C}_x}^{} e^{\mu_x V_x^{(i,j)}}} & \forall (i,j) \in \mathcal{C}_x, x \in \{1,2,\dots,6\}\label{eqB3}
\end{align}

\end{document}